\journalname{Acta Informatica}
\newcommand{\argemp}[2]{\if&#1&\else#2\fi}
\newcommand{\argdef}[2]{\if&#1&#2\else#1\fi}
\newcommand{\argint}[3]{\if&#2&\else#1#2#3\fi}
\newcommand{\argext}[3]{\if&#1&#3\else#1\if&#3&\else#2#3\fi\fi}
\newcommandx{\mthfnt}[3][1=, 2=0]{{
	\IfStrEqCase{#1}
	{%
		{}%
		{#3}%
		{Name}%
		{%
			\IfStrEqCase{#2}
			{%
				{0}{\mathcal{#3}}%
				{1}{\mathscr{#3}}%
				{2}{\mathfrak{#3}}%
				{3}{\mathbb{#3}}%
			}
			[\ensuremath{\clubsuit}]%
		}%
		{Set}%
		{%
			\IfStrEqCase{#2}
			{%
				{0}{\mathrm{#3}}%
				{1}{\mathsf{#3}}%
				{2}{\mathbb{#3}}%
				{3}{\mathbf{#3}}%
			}
			[\ensuremath{\clubsuit}]%
		}%
		{Fun}%
		{%
			\IfStrEqCase{#2}
			{%
				{0}{\mathsf{#3}}%
				{1}{\mathrm{#3}}%
			}
			[\ensuremath{\clubsuit}]%
		}%
		{Rel}%
		{%
			\IfStrEqCase{#2}
			{%
				{0}{\mathit{#3}}%
				{1}{\mathtt{#3}}%
			}
			[\ensuremath{\clubsuit}]%
		}%
		{Sym}%
		{%
			\IfStrEqCase{#2}
			{%
				{0}{\mathtt{#3}}%
				{1}{\mathbf{#3}}%
			}
			[\ensuremath{\clubsuit}]%
		}%
		{Elm}%
		{\mathnormal{#3}}
	}
[\ensuremath{\clubsuit}]%
}}
\newcommand{\mthsub}[1]{\argemp{#1}{\ensuremath{_{\mathnormal{#1}}}}}
\newcommand{\mthsup}[1]{\argemp{#1}{\ensuremath{^{\mathnormal{#1}}}}}
\newcommandx{\mth}[5][1=, 2=0, 4=, 5=]{{\ensuremath{\mthfnt[#1][#2]{#3}\mthsub{#4}\mthsup{#5}}}}
\newcommandx{\mtharg}[6][1=, 2=0, 4=, 5=]{{\mth[#1][#2]{#3}[#4][#5]\ensuremath{\argint{(}{#6}{)}}}}
\newcommand{\mthstyname}{0}
\newcommand{\mthname}[1][]{\mth[Name][\argdef{#1}{\mthstyname}]}
\newcommand{\mthstyfun}{0}
\newcommand{\mthfun}[1][]{\mth[Fun][\argdef{#1}{\mthstyfun}]}
\newcommand{\mthstysym}{0}
\newcommand{\mthsym}[1][]{\mth[Sym][\argdef{#1}{\mthstysym}]}
\def\it{\begin{itemize} }
\def\ti{\end{itemize} }
\def\en{\begin{enumerate} }
\def\ne{\end{enumerate} }
\def\exptime{\textsc{ExpTime}\xspace}
\def\exptimeC{\textsc{ExpTime-Complete}\xspace}
\def\exptimeH{\textsc{ExpTime-Hard}\xspace}
\def\ptime{\textsc{ptime}\xspace}
\def\np{\textsc{np}\xspace}
\def\tfnp{\textsc{tfnp}\xspace}
\DeclareMathOperator{\nextX}{\mathsf{X}}
\DeclareMathOperator{\until}{\mathbin{\mathsf{U}}}
\DeclareMathOperator{\always}{\mathsf{G}}
\DeclareMathOperator{\eventually}{\ensuremath{\mathsf{F}}\xspace}
\newcommand{\true}{\mathsf{true}}
\newcommand{\false}{\mathsf{false}}
\newcommand{\sat}{\ensuremath{\mathsf{sat}}\xspace}
\def\mp{{\sf mp}}
\def\max{{\sf max}}
\def\parity{{\sf parity}}
\newcommand{\LTL}{\ensuremath{\mathsf{LTL}}\xspace}
\newcommand{\AP}{\mathrm{A\!P}}
\newcommand{\tpl}[1]{\langle {#1} \rangle }
\newcommand{\tup}[1]{\overline{#1}}
\newcommand{\rst}{\upharpoonright}
\newcommand{\abs}[1]{\lvert#1\rvert}
\newcommand{\nat}{\mathbb{N}}
\def\int{\mathbb{Z}}
\newcommand{\Automaton}{\mthname{A}}
\newcommand{\Language}{\mthname{L}}
\def\avg{{\sf avg}}
\def\MC{\ensuremath{\mathcal{M}}\xspace}
\def\tot{{\sf sum}}
\def\init{\iota}
\def\src{{\sf src}}
\def\trg{{\sf trg}}
\def\IN{{\sf in}}
\def\OUT{{\sf out}}
\def\hist{{\sf Hst}} 
\newcommand{\St}{St}
\newcommand{\exec}{{\sf Exec}}
\def\Ag{{\rm Ag}}
\def\St{{\rm St}}
\def\Act{{\rm Act}}
\def\lex{\prec_{lex}}
\def\lexeq{\preceq_{lex}}
\def\LEX{{\sf Lex}}
\def\tr{\uptau}
\def\lab{\uplambda}
\def\dec{\updelta}
\def\FNE{{\sf FNE}\xspace}
\mathchardef\mhyphen="2D 
\newcommand{\pay}{\mthfun{pay}}
\def\SNE{{\sf FSNE}\xspace} 
\newcommand\SEPNE{\ensuremath{\SNE^\epsilon}\xspace}
\newcommand{\SetN}{\mathbb{N}}
\newcommand{\ex}{\mthfun{ex}}
\newcommand{\exit}{\mthsym{exit}}
\newcommand{\crash}{\mthsym{crash}}
\newcommand{\load}{\mthsym{load}}
\newcommand{\robot}{\mthsym{robot}}
\newcommand{\joint}{\mthfun{jnt}}
\newcommand{\sload}{\mthfun{l}}
\newcommand{\initial}{\mthfun{r}}
\tikzstyle{every node} =
\tikzstyle{every edge} +=
\tikzstyle{noall} =
\tikzstyle{nodraw} =
\tikzstyle{nofill} =
\tikzstyle{cnode} =
\tikzstyle{snode} =
\tikzstyle{lnode} =
\tikzstyle{pnode} =
\tikzstyle{rnode} =
\newcommand{\figwarehouse}
	{
	\begin{figure}[h]
		\begin{center}
			\footnotesize
			\begin{tikzpicture}
			[node distance = 1.5cm]
			\node
			(L1)
			{$\sload_{1}$};
			\node 
			(L2)
			[below of = L1]
			{$\sload_{2}$};
			\node
			(LL1)
			[right of = L1]
			{$\initial_1$};
			\node
			(LL2)
			[right of = L2]
			{$\initial_2$};
			\node
			(ghost)
			[below of = LL1, node distance = 0.75cm]
			{};
			\node
			(joint)
			[right of = ghost]
			{$\joint$};
			\node
			(exit)
			[right of = joint]
			{$\ex$};
			\path[-]
			(L1) edge (LL1)
			(L2) edge (LL2)
			(LL1) edge (joint)
			(LL2) edge (joint)
			(joint) edge (exit)
			;
			\end{tikzpicture}
		
		\caption{\label{fig:exm:warehouse} Representation of an automated warehouse with two operating robots.}
		\end{center}
		\vspace{-3em}
	\end{figure}
	}
\begin{document}

  \title{
  	Equilibria for Games with Combined Qualitative and Quantitative Objectives
  	\thanks{We gratefully acknowledge the financial support of
    	ERC Advanced Investigator grant 291528 at Oxford (J.\ Gutierrez, G.\ Perelli, and M.\ Wooldridge),
    	INdAM research project 2017 ``Logica e Autonomi per il Model Checking'' at Naples (A.\ Murano),
			Marie Curie Fellowship of the Istituto Nazionale di Alta Matematica (S.\ Rubin),
			the EPSRC Centre for Doctoral Training in Autonomous Intelligent Machines and Systems EP/L015897/1 and the Ian Palmer Memorial Scholarship (T.\ Steeples).
		A preliminary version of this work appeared in~\cite{GMPRW:IJCAI17}.
  	}
  }


\author{Julian Gutierrez \and 
Aniello Murano \and
Giuseppe Perelli \and 
Sasha Rubin \and 
Thomas Steeples \and
Michael Wooldridge}


\institute{
	J. Gutierrez \at Monash University\\
	\email{julian.gutierrez@monash.edu} \\
  \and
	A. Murano \at University of Naples ``Federico II''\\
	\email{murano@na.infn.it} \\
	\and
	G. Perelli \at Sapienza University of Rome \\
	\email{perelli@diag.uniroma1.it}\\        	
 	\and
	S. Rubin \at University of Sydney\\
	\email{sasha.rubin@sydney.edu.au} \\
	\and
	T. Steeples \at University of Oxford \\
	\email{thomas.steeples@cs.ox.ac.uk}\\
	\and
	M. Wooldridge \at University of Oxford\\
	\email{michael.wooldridge@cs.ox.ac.uk}
}

\date{Received: date / Accepted: date}

\maketitle

\begin{abstract}
	The overall aim of our research is to develop techniques to reason about the 
	equilibrium properties of multi-agent systems.
	We model multi-agent systems as concurrent games, in which each player is a
  process that is assumed to act independently and strategically in
  pursuit of personal preferences.
  In this article, we study these games in the context of finite-memory 
  strategies, and we assume players' preferences are defined by a qualitative 
  and a quantitative objective, which are related by a lexicographic order: a 
  player first prefers to satisfy its qualitative objective (given as a
  formula of Linear Temporal Logic) and then prefers to minimise costs
  (given by a mean-payoff function).
  Our main result is that deciding the existence of a strict $\epsilon$ Nash 
  equilibrium in such games is 2\textsc{ExpTime}-complete (and hence 
  decidable), even if players' deviations are implemented as infinite-memory 
  strategies.
  \keywords{Multi-agent systems \and  Multi-player games \and Nash equilibrium 
  \and  Linear Temporal logic \and  Mean-payoff games \and  Concurrent game 
  structures.
}
\end{abstract}


\section{Introduction}

The last twenty years have seen considerable research directed at the
use of game theoretic techniques in the analysis and verification of
multi-agent systems~\cite{shoham:2008a}. From this standpoint,
agents/processes in a multi-agent system can be understood as players
in a game played on a directed graph (a transition system), acting
strategically and independently in pursuit of their preferences. In
this setting, possible behaviours of agents correspond to the
strategies of players. One important strand of work in this tradition
has been the development of techniques for reasoning about what
properties players (or coalitions of players) can bring about ({\em
  i.e.}, whether they have ``winning strategies'' for certain
conditions)~\cite{AHK02}. Recently, attention has begun to shift from
the analysis of strategic ability to the analysis of the
\emph{equilibrium properties} of such systems. A typical question in
this setting is whether a particular temporal property will hold under
the assumption that players select strategies that collectively form a
Nash equilibrium~\cite{osborne:94a}.

A fundamental question in this work is how the preferences of agents
are represented. One widely-adopted answer to this question is to
associate with each player a \emph{qualitative} goal (objective),
usually given either by a temporal logic formula or else by a winning
(acceptance) condition, such as reachability, safety, B{\" u}chi,
Linear Temporal Logic, etc.~\cite{PR89,GHW15,BBMU15,gutierrez:2017a}.
This approach is closely related to the verification of finite-state
systems, and the model checking paradigm in
particular~\cite{clarke:2000a}. However, the preference structures
that are induced in this way have a rather simple (dichotomous)
structure: a player is simply either satisfied or unsatisfied; no
distinction is made between outcomes that satisfy the player's
objective, nor is any made between outcomes that do not satisfy the
objective. This limits the applicability of such representations for
modelling many situations of interest.  An alternative setting is
given by games where, instead of having a qualitative objective,
players have \emph{quantitative} goals --- for instance, to minimize a
given cost, or to maximise some reward~\cite{EM79,UW11}. Yet a third
possibility, also the focus in this paper, is to use preference models that \emph{combine}
qualitative and quantitative
objectives~\cite{CHJ05,BCHJ09,wooldridge:2013a}. 

We consider goals given by a lexicographic
order, where a player's primary goal is to satisfy its qualitative
objective (given by a formula of Linear Temporal Logic,
\LTL~\cite{demri:2017a}), and a player's secondary goal is to minimise its
costs (where costs are given by a quantitative mean-payoff
objective). This approach has several advantages. The qualitative
objective can be used to express desirable properties on the states of
the system, as is standard practice in the specification and
verification of reactive
systems~\cite{emerson:90a,clarke:2000a,demri:2017a}. For instance,
\LTL\ formulae, which we use to express the
qualitative objective of agents, can be used to specify in a natural
way that an agent prefers not to enter a given set of states (formally
expressed as a safety property) or that an agent desires to eventually
visit a given state of the system (for instance to model the termination
of a task)~\cite{demri:2017a}. In these cases, we can simply
understand agents as ``rational processes'' within a reactive
system. The quantitative objective can be used to restrict the
behaviour of such agents to those that are locally optimal for each
agent. Thus, not only do we want an agent to accomplish its goal, but
to do so \emph{as efficiently as possible}, that is, such that the
cost of performing the task is kept to a minimum.
This latter type of preference is very naturally captured by
\emph{mean-payoff} specifications, even in cases where infinite
behaviour is considered.  Moreover, the combination of qualitative and
quantitative objectives is natural for situations in which agents aim
to satisfy some goal while minimising costs. For example, consider a
robot whose task is to deliver packages around a factory environment:
the primary goal of the robot is to deliver the packages (a
qualitative objective readily expressible in \LTL), while the
secondary goal is to minimise fuel consumption when achieving this
task (a quantitative objective that can be naturally expressed with a
mean-payoff function). Scenarios like these are ubiquitous in embedded
and cyber-physical systems~\cite{alur:2015a}.

The main solution concept we use in this paper is \emph{strict $\epsilon$ Nash equilibrium}~\cite{shoham:2008a}.
The use of Nash equilibrium --- where no player in the game can unilaterally change their strategy and be better off as a consequence --- is readily justified by the fact that this is the best-known and most widely-used solution concept for non-cooperative games.
The use of strict $\epsilon$ Nash equilibrium is less common.
Informally, by strict we mean that any possible unilateral deviation of a player
results in an outcome that is strictly worse for that player.
As such, this solution concept is more stable than ``ordinary'' Nash equilibrium since each player has less incentive to change strategy.
Thus, from a stability point of view, strict Nash equilibrium is a desirable feature.~\footnote{We remark that strict Nash equilibria appear naturally in the study of evolutionary game-theory~\cite{Smith:82}. Indeed, every strict Nash equilibrium in a symmetric game is evolutionary stable.}
However, as expected, a game may have more Nash equilibria than strict Nash equilibria (and every strict Nash equilibrium is already an ordinary Nash equilibrium).
In contrast, allowing for an $\epsilon$ Nash equilibrium, with $\epsilon>0$, may lead to games with more equilibria.
Informally, in an $\epsilon$ Nash equilibrium no player can unilaterally change their strategy and achieve a payoff that is at least as good as $\epsilon$ more than the one already obtained in the Nash equilibrium.
As a consequence, every ordinary Nash equilibrium is also an $\epsilon$ Nash equilibrium (for all $\epsilon>0$), but the converse may fail.
Then, while the strict variant of Nash equilibria can decrease the number of equilibria in a game, the $\epsilon$ variant may increase it.

\paragraph{Contributions} 
We consider games in which preferences are defined by a lexicographic order of goals given by an \LTL\ formula (the primary goal) and a mean-payoff condition (the secondary goal). We prove that deciding the existence of a finite-state strict $\epsilon$ Nash equilibrium is in 2\textsc{ExpTime}.\footnote{The transition systems on which these games are played are sometimes called ``concurrent game structures'' and sometimes ``arenas''.
Note that ``concurrent'' in this context simply means that players move at the same time, {\em i.e.}, synchronously.} 
This result subsumes the case for ordinary Nash equilibrium in
two-player zero-sum games with \LTL\ goals, which is
known to be 2\textsc{ExpTime}-hard. Thus the above problem is 
2\textsc{ExpTime}-complete.
To obtain this result we introduce a reduction to a similar (though
doubly exponentially larger) game where, instead of goals given by
\LTL\ formulae, goals are given by a parity acceptance
condition, and we show how to solve such games in \np.

Our results also show how to solve the rational synthesis
problem~\cite{FKL10} and the rational verification
problem~\cite{WGHMPT16,gutierrez:2017a} within the same complexity
class.  These problems concern establishing which
properties ({\em e.g.}, temporal, $\omega$-regular, etc.)  hold in a game, under the assumption that players in the game choose
strategies in equilibrium.  More specifically, the questions that we
ask are as follows. Given a game as described before, in which each
player has a qualitative goal given by an \LTL\ 
formula, and a quantitative goal given by a mean-payoff condition, we
ask whether a given \LTL\ formula, say~$\phi$, is
satisfied on some/all (strict $\epsilon$) Nash equilibrium/equilibria, if any, of the
game. Since~$\phi$ is not a goal of any of the players, it can be seen
as a property that the ``designer'' of the game wants to see satisfied
assuming rational behaviour of the players/agents in the game/system.

The remainder of the paper is structured as
follows:
\begin{itemize}
\item Section~\ref{secn:concgames} introduces our formal framework and
  defines the games we study throughout the paper.
\item Section~\ref{sec:decision procedures} defines the solution
  concepts underlying our main results, and presents the
  constructions, reductions, and algorithms to solve the main problems
  considered in the paper.
\item Section~\ref{secn:relwork} discusses related work.
\item Finally, Section~\ref{secn:conc} presents a number of
concluding remarks and directions for potential future work.
\end{itemize}

\section{Game Structures}
\label{secn:concgames}

In this section we introduce our game model. We use multi-player games played on finite directed graphs (transition systems), rather than games in extensive-form or normal-form~\cite{osborne:94a}.
Agents move synchronously (which includes the special sequential case), play deterministic (rather than randomised) and finite-state (instead of simply memoryless or infinite-memory) strategies, in pursuit of their individual preferences, which are given as a lexicographic combination of a qualitative temporal-logic property (intuitively, a goal/objective) and a quantitative long-term average of the rewards of its actions.

We fix some notation. If $X$ is a set, then $X^\omega$ is the set of
all infinite sequences over $X$.
If $\alpha$ is a sequence and $n \in \nat$ (the set of non-negative integers) then $\alpha_n$ represents the $(n+1)$st element of $\alpha$. 
If $X$ and $Y$ are sets, then $X^Y$
is the set of all functions $\upalpha:Y \to X$.  We will often use 
Greek letters $\upalpha, \upbeta,\upkappa,\cdots$ to name
functions.  Also, we will use tuple notation: we write
$\upalpha_y \in X$ instead of $\upalpha(y)$. We write $\AP$ for a 
finite set of \emph{atomic propositions} (or \emph{atoms} for short). 

We now define the framework of (propositional) Linear Temporal Logic
(\LTL), which we use extensively in what follows. Our presentation is
complete but is not intended as an introduction to this well-known
language: see, {em e.g.},~\cite{demri:2017a} for a detailed overview.
\paragraph{Linear-Temporal Logic (\LTL)}
	The \emph{formulae of \LTL (over $\AP$)} are generated by the following grammar:
	$$\varphi ::= p \mid \varphi \wedge \varphi \mid \neg \varphi \mid  \nextX  \varphi \mid  \varphi \until \varphi$$
	where $p \in \AP$. We use the standard classical logic
  abbreviations, {\em e.g.}, $\false := p \wedge \neg p$, as well as
  those for \LTL, {\em e.g.}, $\eventually \varphi := \true \until \varphi$ and 
  $\always \varphi := \neg \eventually \neg \varphi$.
	
	Formulae of \LTL\ are interpreted over infinite words $\alpha \in (2^{AP})^\omega$. Define the \emph{satisfaction} relation 
	$\models$ as follows:
	\begin{itemize}
		\item
			$(\alpha,n) \models p$ iff $p \in \alpha_n$; 
		
		\item
			$(\alpha,n) \models \varphi_1 \wedge \varphi_2$ iff $(\alpha,n) \models \varphi_i$ for $i = 1,2$; 
		
		\item
			$(\alpha,n) \models \neg \varphi$ iff it is not the case that $(\alpha,n) \models \varphi$; 
		
		\item
			$(\alpha,n) \models \nextX \varphi$ iff $(\alpha,n+1) \models \varphi$;
		
		\item
			$(\alpha,n) \models \varphi_1 \until \varphi_2$ iff there exists $j \geq n$ such that $(\alpha,j) \models \varphi_2$ and for all $n \leq i < j $, $(\alpha,i) \models \varphi_1$.  
	\end{itemize} 
	
Finally, define $\alpha \models \varphi$ if $(\alpha,0) \models \varphi$.
The \emph{size} of a formula is simply the number of operators within it.

\paragraph{Arenas and Lexicographic Games}	\label{def:games}
	An \emph{arena} is a tuple
	\[A = \tpl{\Ag,  \Act, \St, \iota, \tr}\]
	where $\Ag$, $\Act$, and $\St$ are finite non-empty sets of
  \emph{agents}, \emph{actions}, and
  \emph{states}, respectively;	$\iota \in \St$ is the \emph{initial
    state}; and $\tr : \St \times \Act^{\Ag} \rightarrow \St$ is a
  \emph{transition function} mapping each pair consisting of a state
  and an action for each agent, namely a \emph{decision} $\dec \in
  \Act^{\Ag}$, to a successor state. 
	
	A \emph{$\LEX(\LTL,\mp)$ game} is a tuple 
	\[ G = \tpl{A, (\upkappa_{a})_{a \in \Ag}, \AP, \lab, (\upgamma_{a})_{a \in \Ag} }\] where $A$ is an arena; $\upkappa_{a}: \St \to \int$ is a \emph{weight function} for agent $a \in \Ag$ associating an integer \emph{weight} to each state; $\AP$ is a finite set of \emph{atomic propositions} ; $\lab: \St \to 2^{\AP}$ is a \emph{labelling function} assigning a subset of atomic propositions to every state of the arena;
	and
	$\upgamma_{a}$ is an \LTL formula over $\AP$, called the \LTL \emph{goal} associated with agent $a$.

In the following, we introduce some basic notions related to games.

\paragraph{Executions}
	A \emph{path} $\pi = s_0 \dec_0 s_1 \dec_1 \cdots$ is an infinite sequence over $\St \times \Act^{\Ag}$ such that $\tr (s_i,\dec_i) = s_{i+1}$ for all $i$.
	In particular, $\dec_i(a)$ is the action of agent $a$ in step $i$. 

	A path $\pi$ induces:
	\begin{enumerate}
		\item
			the sequence $\lambda(\pi) = \lambda(s_0) \lambda(s_1) \cdots$ of sets of atoms, and
		
		\item
			for each agent $a$, the sequence $\upkappa_a(\pi) = \upkappa_a(s_0) \upkappa_a(s_1) \cdots$ of weights.
	\end{enumerate}
	
	An \emph{execution} is a path with $s_0 = \iota$. Let $\exec$ denote the set of all executions. 

\paragraph{Qualitative Goals}
In this work, qualitative goals are represented by \LTL formulas. If $\gamma$ is an \LTL formula and $\pi$ is an execution, we say that \emph{$\pi$ satisfies $\gamma$}, and write $\pi \models \gamma$, if $\lambda(\pi) \models \gamma$. 

For an execution $\pi \in \exec$ and an \LTL goal $\upgamma_a$, 
define 
$$
\sat_a(\pi) = \begin{cases}
\top & \text{ if } \pi \models \upgamma_a\\
\bot & \text{ otherwise.}
\end{cases} 
$$

\paragraph{Quantitative Goals}

	For a sequence $\alpha \in \mathbb{R}^\omega$, let $\mp(\alpha)$ be
	the \emph{mean-payoff} of $\alpha$, that is, 
	\[ \mp(\alpha) = \liminf_{n \to \infty} \avg_n(\alpha)	 
	\]
	where, for $n \in \mathbb{N}$, we define
\[\avg_n(\alpha) = \frac{1}{n}\sum_{j=0}^{n-1} \alpha_j.\] 
	This definition naturally extends to executions, {\em i.e.}, define $\mp_a(\pi) = \mp(\upkappa_a(\pi))$.

\paragraph{Lexicographic Payoffs}
	Let $\Omega = \{\bot,\top\} \times \mathbb{R}$ denote the set of \emph{payoffs}, and define the \emph{payoff function} for agent $a$ to be $\pay_{a}: \exec \to \Omega$ by $\pay_a(\pi) = (\sat_a(\pi),\mp_a(\pi))$.
	Each agent is trying to maximise its payoff. In other words, agent $a$'s primary goal is to \emph{satisfy} its \LTL formula $\gamma_{a}$, and its secondary goal is to \emph{maximise} its $\mp$-reward $\mp_a(\pi)$. 
	Formally, we define the \emph{lexicographic ordering} on the set $\Omega$ of payoffs:
	$(x,y) \lex (x',y')$ iff, either ($x = \bot$ and $x'= \top$) or ($x = x'$ and $y < y'$).
	Note that this ordering is total.
	
	\begin{remark}
		We consider weights as rewards to be maximised. However, one may be concerned with games where the agents have costs they want to minimise. One immediate thought is to take such a cost-game, replace all the weights by their negation and consider the resulting maximisation problem. However, the resulting game will not be strategically identical to the original cost game, since for an arbitrary execution $\pi$, we do not have $-\mp(\upkappa_a(\pi)) = \mp(-\upkappa_a(\pi))$ in general. 
		
		One easy way to see this is to consider the arena, $A$, with two states, $s^1, s^2$, (the number of agents and their available actions are not relevant for the most part) with a transition function such that the players can either stay in the same state, or move to the other state. Moreover, we set $s^1$ to be the start state. Thus, the arena looks like this~\footnote{Note that the arena is similar to the one of \cite[Fig. 3 in Lemma 7]{VCHRR15} to prove that Multi mean-payoff games require in general infinite memory strategies to be played optimally.}:

		\begin{center}
			\begin{tikzpicture}[state/.style={circle, draw, minimum size=1cm}]
				\node[state] (s^1) {$s^1$};
				\node[] (start) [left =1cm of s^1] {};
				\node[state] (s^2) [right =2cm of s^1] {$s^2$};
	
				\draw [-{Latex[width=2mm]}]
					(start) edge (s^1)
					(s^1) edge[out=30, in=150] (s^2)
					(s^1) edge[out=70, in=110, loop] (s^1)
					(s^2) edge[out=210, in=330] (s^1)
					(s^2) edge[out=70, in=110, loop] (s^2)
				;
			\end{tikzpicture}
		\end{center}

		Now, for a given player, $a$, set $\upkappa_a(s^1) = 0$ and $\upkappa_a(s^2) =1$. Now define two sequences, $a_n, b_n$, with,
		\begin{align*}
			a_0 &= 0, \\
			b_0 &= 3, \\
			a_{n+1} &= 3b_n + 2, \\
			b_{n+1} &= 3a_{n+1} + 2,
		\end{align*}
		for all $n \geq 0$. It is easy to see that we have both $a_n < b_n$, as well as $b_n < a_{n+1}$ for all $n \geq 0$. With this, we define an execution $\pi$ such that $\pi[k] = s^1$ if there exists some $n$ such that $a_n < k < b_n$ and $\pi[k] = s^2$ otherwise. Intuitively, $\pi$ bounces between $s^1$ and $s^2$, spending three times as long on each state as it did on the previous state. 

		It is easy to verify that $\avg_{a_n}(\pi) \leq 0.25$ and that $\avg_{b_n}(\pi) \geq 0.75$ for all $n$. Thus, we have,
		\begin{align*}
			-\mp(\upkappa_a(\pi)) \geq -0.25, \\
			\mp(-\upkappa_a(\pi)) \leq -0.75.
		\end{align*}

		Thus, to reason about cost-games, we cannot simply negate the weights and consider the resulting maximisation problem.

		Whilst this may seem unsatisfying, there are two ways out here. Firstly, for finite state strategies, these values \emph{do} coincide, as these induce ultimately periodic executions, whose sequence of weights will have a well-defined limit-average. Secondly, whilst the negative weights may not directly encode the original game, they do generally reflect the strategic nature of it - the resulting game is not \emph{completely} disparate from the game it was based on. Additionally, checking a given strategy profile to see if it is a Nash equilibrium is generally much easier than synthesising one in the first place. As such, we can consider the maximisation game and then try and translate our understanding of it to the original cost game. 
	\end{remark}

\paragraph{Strategies}
	A \emph{history} is a finite, possibly empty, sequence $\dec_0 \dec_1 \cdots \dec_{n-1}$ of decisions. The set of all histories is denoted $\hist$. A \emph{strategy}, $\upsigma$, for agent $a \in \Ag$ is a function $\hist \to \Act$. We emphasize that strategies map finite sequences of decisions (not states) to actions. This differs from the conventional definition in the literature. However, with this alternative definition, the Nash equilibria of the game are invariant under bisimulation \cite{DBLP:journals/lmcs/0001HPW19} -- with the conventional definition, they are not. Thus, we use our model so that our algorithm produces equilibria that have the useful property of being invariant under bisimulation.
	
	A \emph{strategy profile} is a function $\vec{\upsigma}: \Ag \to (\hist \to \Act)$. A strategy profile $\vec{\upsigma}$ induces a unique execution $\pi_{\vec{\upsigma}}$, {\em i.e.}, the execution $\pi_{\vec{\upsigma}} = s_0 \dec_0 s_1 \dec_1 \cdots$ such that $s_0 = \iota$ and $\dec_i(a) = \vec{\upsigma}(a)(\dec_0 \dec_1 \cdots \dec_{i-1})$ for $i \geq 0$. 
	
	Let $a \in \Ag$ be a player, $\vec{\upsigma}$ a strategy profile, and $\upsigma_a^{\prime}$ be an additional strategy for player $a$ - for convenience, we introduce two associated functions, $\vec{\upsigma}_{-a} : \Ag \setminus \{a\} \to (\hist \to \Act)$ and $(\vec{\upsigma}_{-a}, \upsigma_a^{\prime}) : \Ag \to (\hist \to \Act)$. Semantically, we define these as follows: $\vec{\upsigma}_{-a}(b) = \vec{\upsigma}$ for all $b \in \Ag \setminus \{a\}$, and $(\vec{\upsigma}_{-a}, \upsigma_a^{\prime})(a) = \upsigma_a^{\prime}$ with $(\vec{\upsigma}_{-a}, \upsigma_a^{\prime})(b) = \vec{\upsigma}(b)$ for all $b \in \Ag$ with $b \neq a$.

\paragraph{Finite-state strategies}
	A strategy $\upsigma$ is \emph{finite-state} if it is generated by an automaton $M$ with input alphabet $\Sigma = \Act^{\Ag}$ and with output function $\lambda:Q \to \Act$. That is, on input $h \in \hist$, the automaton $M$ reaches a state $q_h$ such that $\lambda(q_h) = \upsigma(h)$. A strategy profile, $\vec{\upsigma}$ is \emph{finite-state} if every strategy $\vec{\upsigma}(a)$ is finite-state. Observe that in this case, the unique execution $\pi_{\vec{\upsigma}}$ is ultimately periodic. 

\begin{remark}
	There are several reasons for considering finite state strategies, rather than strategies with unbounded memory.
	First, from a modelling perspective (especially within the AI and multi-agent systems communities), this is a desirable representation as it can be used to synthesise models for individual agents in a system.
	Second, from a more theoretical and computational standpoint, in games with quantitative objectives, finite state strategies can render decidable settings that would be undecidable otherwise~\cite{UW11}.
	Finally, the use of finite state machines to capture strategies in games played over an infinite number of rounds is standard in the game theory literature~\cite{binmore:92a}.
\end{remark}

	\paragraph{Strict $\epsilon$ Nash-equilibria}
	The solution concept we work with is the \emph{strict $\epsilon$ Nash-equilibrium}.
	This is a natural refinement of $\epsilon$ Nash-equilibrium~\cite{shoham:2008a}, and moreover includes strict
	Nash equilibrium as a special case.
	For $\epsilon \geq 0$ and $(x,y) \in \Omega$, let $(x,y) + \epsilon$ denote
	$(x,y+\epsilon) \in \Omega$.
	A strategy profile $\vec{\upsigma}$ is a \emph{strict $\epsilon$ Nash-equilibrium} if for every agent $a \in \Ag$, and every strategy $\upsigma_a^\prime \neq \vec{\upsigma}_a$ for $a$, we have that
	$\pay_a(\pi_{(\vec{\upsigma}_{-a},\upsigma_a^\prime)}) \lex \pay_a(\pi_{\vec{\upsigma}}) + \epsilon$.
	If $\epsilon = 0$ then we call this a \emph{strict Nash equilibrium}.
	We remark that an (ordinary) Nash equilibrium uses $\lexeq$ instead of
	$\lex$.
	By $\SEPNE(G)$ we denote the set of Finite-state Strict $\epsilon$ Nash Equilibria in $G$.
	We emphasise that, in the definition of a finite-state strict $\epsilon$ Nash-equilibrium $\vec{\upsigma}$, the deviating strategies $\vec{\upsigma}'(a)$ need not be finite-state.
	Intuitively, this captures worst-case behaviour of the deviators.

\paragraph{Decision Problems}
	The central decision problem of this work, called Rational Synthesis or Rational Verification~\cite{FKL10,GHW15,WGHMPT16,KPV16}, asks if there exists a \SEPNE so that the induced play $\pi_{\vec{\upsigma}}$ satisfies a given \LTL condition $\Phi$. Note that in case $\Phi = \top$ this amounts to the deciding the existence of a \SEPNE.

	Formally, for a rational $\epsilon \geq 0$ we consider the following decision problems for the class of $\LEX(\LTL,\mp)$-games:

\begin{itemize}
	\item \emph{$\SEPNE$-emptiness}. \\
	\emph{Given}: game $G$\\
	\emph{Question}: Is it the case that $\SEPNE(G) \neq \emptyset$?
	\item \emph{$\SEPNE$-existence}. \\
	\emph{Given}: Game $G$ and  \LTL formula $\Phi$.\\
	\emph{Question}: Does there exist a $\vec{\upsigma} \in \SEPNE(G)$ such that 
	$\pi_{\vec{\upsigma}} \models \Phi$?
\end{itemize}

To help the reader better understand our model and its applicability,
we present an example.

\begin{example}
	In an automated warehouse, $n$ robots move around to load items and bring them to the exit. Their objective is to load and unload as efficiently as possible---without crashing into each other.

	Assume that the warehouse is represented by a directed edge-labeled graph $\mathcal{G} = (V, E)$ where $E:V \times D \to V$ for some finite set of \emph{directions} $D$. For instance, if the warehouse is a grid then we may take $D = \{north,south,east,west\}$ and an agent in position $v$ executing action $d \in D$ will move to position $E(v,d)$. In addition, we are given particular vertices: for each robot $a \in \Ag$, a vertex $\initial_a$ representing its initial position, $\ex \in V$ representing the exit and $\sload_{1}, \ldots, \sload_{n} \in V$ representing the loading points for the agents. We assume loading points are different from each other and from the exit.
	
	We can model the setting by means of the arena
	\[A = \tpl{\Ag,  \Act, \St, \iota, \tr}\]
	where 
	\begin{itemize} 
		\item the agents $\Ag =\{1, \ldots, n \}$ are the robots moving around the warehouse; 
		\item $\Act = D$ are the actions that each robot can take;  
		\item $\St = V^{n}$ is the set of states of the system, where the $a$-th component of the tuple denotes the position of robot $a$;	
		\item $\iota = (\initial_{1}, \ldots, \initial_{n})$ is the initial state, denoting the initial position of the robots; 
		\item and $\tr : \St \times \Act^{\Ag} \rightarrow \St$ maps $(s,\delta)$ to the state whose $a$th component is $E(s_a,\delta(a))$.
	\end{itemize}
	
	To capture robot objectives, we define the {$\LEX(\LTL,\mp)$ game}
	
	\[ G = \tpl{A, (\upkappa_{a})_{a \in \Ag}, \AP, \lab, (\upgamma_{a})_{a \in \Ag} }\]
	
	where $A$ is the arena described above and 
	
	\begin{itemize}
		\item
			$\kappa_{a}(v) = 1$ if $v = \sload_a$, and $0$ otherwise (the weight function rewards the agent anytime it hits the loading point);
			
		\item
			$\AP = \{\exit_{1}, \ldots, \exit_{n}, \load_{1}, \ldots, \load_{n}, \crash_{1}, \ldots,\crash_n \}$, denoting the agent $a$ is at the exit ($\exit_{a}$), or the loading vertex ($\load_{a}$), or crashing with another agent ($\crash_a$);
				
		\item
			For every state $s \in V$ and agent $a \in \Ag$, we have that 
			\begin{itemize} 
			\item $\exit_{a} \in \lambda(s)$ iff $s_{a} = \ex$, 
			\item $\load_{a} \in \lambda(s)$ iff $s_{a} = \sload_{a}$, 
			\item $\crash_a \in \lambda(s)$ iff $s_a = s_b$ for some $b \neq a$ (i.e., another agent occupies the same position as agent $a$);
			\end{itemize}
			
		\item
			for every agent $a$, $\gamma_{a} = \always (\neg \crash_a) \wedge \always (\load_{a} \to \nextX (\neg \load_{a} \until \exit_{a}) )$, i.e., the agent is trying to ensure that it never crashes with another agent and that it visits the exit point between 
			every two occurrences of a visit to the loading point (this captures that the agent successfully loads and unloads an item).
			
	\end{itemize}

	Intuitively, an agent's primary objective (the \LTL formula) is to never crash and never load two items in a row before reaching the exit.
	Furthermore, an agent's secondary objective (the mean-payoff value) is to ship items as fast as possible, in order to maximise, in the limit-average, the number of times it reaches its loading point.
	
	\figwarehouse
	What might an equilibrium for this game look like? Observe that agents have essentially two possible behaviours to satisfy the primary goal: \emph{idle} and \emph{cycle}. The idle allows the robots a finite number of trips from the loading point to the exit point before keeping them forever away from the loading point, whereas the cycle makes them move back and forth between the loading point and the exit forever. In both cases, they additionally have to avoid crashing with each other. The secondary goal makes every agent to prefer the cycle behaviour. Indeed, the only way to get a strictly positive mean-payoff reward is to reach the loading point infinitely many times (with bounded delay among two consecutive times).
	
	Consider a warehouse with two operating robots as represented in Figure~\ref{fig:exm:warehouse}, and assume that the initial positions of $\robot_1$ and $\robot_2$ are $\initial_1$ and $\initial_2$, respectively.
	Moreover, consider the infinite path $\pi = u \cdot (u')^{\omega}$ with
	
	$$u = (\initial_1, \initial_2), (\sload_1, \sload_2), (\initial_1, \initial_2), (\joint, \initial_2), (\ex, \initial_2)
	$$
	\noindent and
	$$
	u' = (\joint, \initial_2) (\initial_1, \joint), (\sload_1, \ex), (\initial_1, \joint), (\joint, \initial_2), (\ex, \sload_2)
	$$
	
	\noindent where each pair represents the position of the robot at any step.
	
	The path satisfies both $\gamma_1$ and $\gamma_2$.
	Moreover, every robot cycles from the loading point to exit and back in six steps.
	Therefore, their mean-payoff value on $\pi$ is given by $\frac{1}{6}$.
	Observe that it would not be possible to improve such payoff without violating the primary objective.
	Thus, $\pi$ achieves optimal value for both robots and it is therefore a $\epsilon$-Nash Equilibrium, for every $\epsilon \geq 0$ (in particular, it is a Nash Equilibrium for $\epsilon = 0$).
	It is also strict, as any other path of $\robot_a$ satisfying $\gamma_{a}$ decreases the mean-payoff value.
	
	Observe that for a large enough value of $\epsilon$, every path that satisfies both $\gamma_1$ and $\gamma_2$ is a strict $\epsilon$ Nash Equilibrium.
	In particular, for $\epsilon = \frac{1}{6}$, the idle strategies for the robots producing the outcome $(\initial_1, \initial_2)^\omega$ is a strict $\epsilon$ Nash Equilibrium, as the primary goals are all achieved, and the secondary goal cannot be improved by strictly more than $\frac{1}{6}$.
	This is obviously not desirable from the designers point of view, as we might require that none of the agents remains idle.
	Thus, we might require an equilibrium to satisfy the \LTL formula $\Phi = \bigwedge_a \always \eventually \load_{a}$ which states that each agent is loading a fresh item infinitely often.
	The existence of such a solution can be checked by solving the \SEPNE-existence problem.
\end{example}

\section{FSNE$^{\epsilon}$-Existence and FSNE$^{\epsilon}$-Emptiness are 2\textsc{ExpTime}-complete} 
\label{sec:decision procedures}

In this section we establish our main technical result, {\em i.e.}, that \SEPNE-emptiness is in $2$\exptime. We then show that \SEPNE-existence is $2$\exptime-complete - we show membership by reducing to the \SEPNE-emptiness problem and show hardness using a reduction from \LTL games. We remark that some of the assumptions in these results are quite general (i.e., using \LTL as a specification language for qualitative properties, mean-payoff for quantitative properties, equilibria as a solution concept, restricting to finite-state strategies) and some are also used to make the proofs practicable (i.e., using strict equilibria rather than ordinary equilibria, see the discussion in the conclusion). 

\begin{theorem}
	\label{thm:NE}
	The following problem is in 2\exptime: given a $\LEX(\LTL,\mp)$ game $G$ and a rational $\epsilon \geq 0$, decide whether there exists a strategy profile $\vec{\upsigma}$ such that $\vec{\upsigma} \in \SEPNE(G)$.
\end{theorem}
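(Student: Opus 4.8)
The plan is to reduce, in doubly exponential time, the given $\LEX(\LTL,\mp)$ game $G$ to a game $G'$ in which each player's qualitative objective is a parity condition, and then to solve the resulting $\LEX(\parity,\mp)$ game by a nondeterministic polynomial-time procedure. Composing an $\np$ procedure with a doubly exponential blow-up yields the claimed $2\exptime$ bound.

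First I would translate each \LTL\ goal $\upgamma_a$ into a deterministic parity automaton $\mathcal{D}_a$ recognising the same $\omega$-language over $2^{\AP}$. The standard construction (nondeterministic B\"uchi, then determinisation) gives each $\mathcal{D}_a$ with doubly-exponentially many states and exponentially many priorities in $|\upgamma_a|$. Taking the synchronous product of the arena with all the $\mathcal{D}_a$ produces an arena of doubly exponential size; inheriting the weight functions $\upkappa_a$ and replacing each \LTL\ goal by the parity condition read off $\mathcal{D}_a$ turns $G$ into a $\LEX(\parity,\mp)$ game $G'$. Because the automata are deterministic, there is a payoff-preserving bijection between executions of $G$ and of $G'$, hence between strategies (of \emph{any} memory) in the two games; I would use this to show that $\SEPNE(G)\neq\emptyset$ iff $G'$ has a finite-state strict $\epsilon$ Nash equilibrium. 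The determinism is essential here: it guarantees that an arbitrary, possibly infinite-memory, deviation in $G$ corresponds to exactly one deviation in $G'$ with the same payoff, so the reduction stays faithful even for the worst-case infinite-memory deviators allowed by the definition.

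The core is the $\np$ procedure for $\LEX(\parity,\mp)$ games, resting on the standard ``punishment'' characterisation of equilibria. For each player $a$ define the punishment value $\mathrm{pun}_a$ as the value of the two-player zero-sum game in which $a$ maximises its lexicographic payoff and the coalition $\Ag\setminus\{a\}$ minimises it: a zero-sum game with a parity-primary, mean-payoff-secondary objective, which is determined and in which the minimising coalition admits finite-memory optimal punishing strategies. The characterisation states that an ultimately periodic play $\pi$ is the outcome of a finite-state strict $\epsilon$ Nash equilibrium essentially iff $\mathrm{pun}_a \lex \pay_a(\pi)+\epsilon$ for every $a$: the ``if'' direction is realised by trigger strategies (all players follow $\pi$, and upon the first deviation the others switch to the coalition's optimal punishment against the culprit), and the ``only if'' direction follows because a best response to the equilibrium profile can always guarantee at least $\mathrm{pun}_a$. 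Since strategies are defined over histories of \emph{decisions}, every on-path deviation is immediately observable, so the trigger can always fire; and since $\mathrm{pun}_a$ is a genuine zero-sum value, it caps the payoff of \emph{every} deviation, including infinite-memory ones, by determinacy. The algorithm then: (i) guesses a compact description of an equilibrium outcome --- its lasso shape $x\cdot y^{\omega}$ or, more robustly, the set of edges used infinitely often together with an Eulerian-type flow certifying the induced per-player mean-payoffs and dominant priorities, all of size polynomial in $|G'|$; (ii) computes each $\pay_a(\pi)$ from those priorities and averages; (iii) for each $a$ guesses certificates for $\mathrm{pun}_a$ (a finite-memory punishing strategy for the coalition and a near-optimal response for the deviator) and verifies them in polynomial time; and (iv) checks the inequalities $\mathrm{pun}_a \lex \pay_a(\pi)+\epsilon$. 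Taking $\Phi=\top$ decides $\SEPNE$-emptiness.

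I expect the main obstacle to be step (iii) together with the infinite-memory issue: one must argue that the lexicographic parity/mean-payoff coalition game is determined with a finite-memory optimal strategy on the minimising side, that its value $\mathrm{pun}_a$ can be certified and verified within $\np$, and --- crucially --- that this value correctly bounds deviations using unbounded memory, so that restricting the equilibrium profile to finite-state strategies does not weaken resistance to infinite-memory deviators. A secondary subtlety, which I would flag explicitly, is strictness when $\epsilon=0$: a deviation agreeing with the prescribed strategy on all reachable histories reproduces $\pi$ and is not strictly worse, so the strict inequality must then come from $\mathrm{pun}_a \lex \pay_a(\pi)$ holding strictly, whereas for $\epsilon>0$ the additive slack renders such play-preserving deviations harmless automatically. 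Pinning down the polynomial-size witness in step (i), in the presence of several simultaneously-evaluated mean-payoff objectives on a single play, is the remaining quantitative point to settle.
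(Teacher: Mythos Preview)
Your overall architecture matches the paper's: translate the \LTL\ goals to deterministic parity automata, take the product to obtain a $\LEX(\parity,\mp)$ game of doubly-exponential size, characterise equilibrium outcomes via coalition punishment values and trigger strategies, and then search for a witnessing lasso/flow in nondeterministic polynomial time. Two points, however, are not right as stated and would break the argument.

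First, the claim that ``the minimising coalition admits finite-memory optimal punishing strategies'' is false for $\LEX(\parity,\mp)$ objectives: already in two-player zero-sum mean-payoff parity games the optimal value may require infinite memory on either side (cf.\ \cite[Fig.~1]{CHJ05}). What \emph{is} true---and what the paper uses (Proposition~\ref{prop:approx})---is that for every $\epsilon'>0$ the coalition has a finite-state $\epsilon'$-optimal strategy. The trigger construction must therefore be built from such $\epsilon'$-punishing strategies, and this is exactly where the $\epsilon$ in ``strict $\epsilon$-NE'' is consumed. Your direction from the characterisation to the equilibrium cannot go through with genuinely optimal finite-state punishments.

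Second, your characterisation is missing a locality condition. You phrase the test as a single inequality $\mathrm{pun}_a \lex \pay_a(\pi)+\epsilon$ with one punishment value per player, but punishment values are \emph{state-dependent}: after following $\pi$ for $k$ steps and deviating, player $a$ lands in some state $s'$ (determined by the others' simultaneous actions at step $k$), from which it can secure $p_a(s')$. The correct statement (Proposition~\ref{prop:NEchar}) requires a vector $\tup{z}$ with each $z_a\in\{p_a(s):s\in\St\}$, together with the condition that every pair $(s_i,\dec_i)$ on $\pi$ be $z_a$-\emph{secure} (every one-step unilateral deviation of $a$ leads to a state with punishment value $\lexeq z_a$), and $z_a\lex\pay_a(\pi)+\epsilon$. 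Without the secure-transition restriction your ``if'' direction fails: a path can satisfy your single global inequality while passing through a step from which a later deviation is profitable. Algorithmically this means you must compute $p_a(s)$ for \emph{all} $s\in\St$, guess $\tup{z}$, restrict to the subgraph $G[\tup{z}]$ of $\tup{z}$-secure transitions, and only then run the lasso/flow search there---which is precisely what the paper does.
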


We split the proof into four steps, which we now outline. After giving the technicalities of these steps, we show how to put them together in Section~\ref{sec:together} to establish the theorem.

\paragraph{Replace LTL- by parity-objectives.} 
	In Section~\ref{sec:LTLtoParity} we show that every $\LEX(\LTL, \mp)$ game can be converted into a $\LEX(\parity, \mp)$ game having the same set of finite-state strict $\epsilon$-Nash Equilibria.  
	Intuitively, we replace each agent's \LTL objective by a parity objective.
	Let $P$ be a finite set of integers - then a sequence of priorities, $p_0 p_1 \cdots \in P^\omega$, satisfies the parity condition if the largest priority occurring infinitely often in the sequence is even.
	Thus, the rest of the proof applies to a $\LEX(\parity,\mp)$-game $G = (A,(\kappa_a)_{a \in \Ag}, (\rho_a)_{a \in Ag})$ where $\rho_a: \St \to \int$ is a priority function for each agent, and agent $a$'s primary objective is to ensure the sequence $\rho_a(\cdot)$ satisfies the parity condition, and $a$'s secondary objective is to maximise the mean-payoff of $\kappa_a(\cdot)$.
	Later, we show that such a reduction results in a $\LEX(\parity, \mp)$ game that is at most doubly exponentially larger than the size of the goals $\gamma_{a}$'s.

\paragraph{Two-agent zero-sum games.}	
In Section~\ref{sec:approxvalues} we study two-agent zero-sum games with a $\LEX(\parity, \mp)$ objective (played on the same arena as $G$). We prove that every such game, $H$, has a minimax value $val(H)$, and this value is computable in time polynomial in the number of states and edges, and exponential in the number of priorities and weights of the game. Moreover, we show that for every $\epsilon > 0$ there exists a finite-state strategy for the minimizing agent that ensures the maximizing agent's payoff can achieve a payoff of at most $val(H) + \epsilon$. The proofs in this section make use of Mean-Payoff Parity Games. In particular, it builds from the computation of optimal value of these games as in~\cite{CHJ05} to derive the optimal value in the $\LEX(\parity,\mp)$ setting.

\paragraph{Reducing Equilibrium Finding to Path Finding}	
In Section~\ref{sec:characterisingSNE} we reduce the problem of $\SEPNE$-emptiness to the one of finding payoff thresholds $\tup{z} \in \Omega^{\Ag}$ and an ultimately periodic path $\pi$ in a certain graph (that we call $G[\tup{z}]$) such that $z_a \lex \pay_a(\pi) + \epsilon$. More precisely, each $z_a$ is a so-called ``punishing value'', {\em i.e.}, the value of a two-agent zero-sum game with a $\LEX(\LTL,\mp)$ objective played on the same arena as $G$, starting at some state $s \in St$, but with $a$ trying to maximise its payoff and the rest of the opponents (viewed as a single player) trying to minimise $a$'s payoff. As such, we also show that for each agent $a$, the value $z_a$ can be taken from the set of values (parameterised over the possible start states of the arena) of the two-player game $H$ considered in the previous step. Thus, the state space of the vector $\tup{z}$ is bounded by the number of states of the game. 
	
\paragraph{Path Finding in Multi-Weighted Graphs with LEX(parity,mp) Payoffs}
In Section~\ref{sec:pathfinding} 
	we show how to find ultimately periodic paths $\pi$ such that $z_a  \lex \pay_a(\pi) + \epsilon$ in graphs of the form $G[\bar{z}]$.
	We do this by adapting the linear programming approach for computing zero-cycles in mean-payoff graphs~\cite{KS88}.

\subsection{Replacing \LTL objectives by parity objectives} \label{sec:LTLtoParity}

In this section, we show that every $\LEX(\LTL, \mp)$ game can be converted into a $\LEX(\parity, \mp)$ game having the same set of finite-state strict $\epsilon$-Nash Equilibria.  We begin with a definition of the parity condition and $\LEX(\parity,\mp)$-games.

\paragraph{Parity games}
A sequence $\alpha \in X^\omega$, where $X$ is a finite non-empty set of integer \emph{priorities}
satisfies the \emph{parity condition} if the largest priority occurring infinitely often is even. 

A \emph{$\LEX(\parity,\mp)$ game} $G$ is a tuple 	\[\tpl{A, (\upkappa_{a})_{a \in \Ag}, (\rho_{a})_{a \in \Ag} }\] where $A$ is an arena, $\kappa_a:\St \to \int$  is a weight function for agent $a$, 
and $\rho_a:\St \to \int$ is a \emph{priority function} for agent $a$. For an execution $\pi = s_0 \dec_0 s_1 \dec_1 \cdots \in \exec$ let $\rho(\pi) = \rho(s_0) \rho(s_1) \cdots \in \int^\omega$. For an agent $a \in \Ag$ define 
\[
\parity_a(\pi) = \begin{cases} 
\top & \text{ if } \rho(\pi)\text{ satisfies the parity condition.}\\
\bot & \text{ otherwise.}
\end{cases}  
\]
The payoff function $\pay_a: \exec \to \Omega$ for agent $a$ is defined by 
\[\pay_a(\pi) = (\parity_a(\pi),\mp_a(\pi)).\]
As before, each agent is trying to maximise its payoff under the lexicographic ordering. This completes the definition of parity games.

\paragraph{Deterministic Automata}

In order to systematically perform the required conversion, we introduce \emph{deterministic parity automata}. An \emph{automaton} is a tuple $\tpl{\Sigma,Q,q_0,\Delta}$ where $\Sigma$ is a finite non-empty set called the \emph{input alphabet}, $Q$ is a finite non-empty set of \emph{states}, $q_0 \in Q$ is an \emph{initial state}, and $\Delta:Q \times \Sigma \to Q$ is a \emph{transition function}. 

A \emph{deterministic parity automaton on words (DPW)} is an automaton with a \emph{priority function} $\rho:Q \to \mathbb{Z}$. The \emph{number of priorities} is the cardinality of the set $\rho(Q)$. An \emph{input word} is an infinite sequence over $\Sigma$. A \emph{run} is an infinite sequence over $Q$. Every input word $w_0 w_1 \cdots$ determines a run $s_0 s_1 \cdots$, i.e., $s_0 = q_0$ and $\Delta(s_i,w_i) = s_{i+1}$ for every $i \geq 0$. A run is \emph{accepting} if the largest priority occurring infinitely often in $\rho(s_0) \rho(s_1) \cdots$ is even. An input word is \emph{accepted} if its run is accepting. The \emph{language} of a DPW $\Automaton$, denoted $\Language(\Automaton)$, is the set of input words it accepts. 

With this machinery in place, one can effectively compile \LTL formulas into DPW:
\begin{theorem}~\cite{Var95,Pit07} \label{thm:LTL to DPW}
One can effectively transform a given $\LTL$ formula $\varphi$ over $\AP$ into a DPW 
$\Automaton = \tpl{2^{\AP}, Q, q^{0}, \Delta, \rho}$ over the alphabet $2^{\AP}$ such that $\Language(\Automaton) = \{\alpha \in (2^{\AP})^\omega: \alpha \models \varphi \}$. Moreover, the number of states of the DPW is at most doubly exponentially larger than the size of $\varphi$, and the number of priorities is at most singly exponentially larger than the size of $\varphi$.
\end{theorem}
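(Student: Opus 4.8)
The plan is to obtain the DPW by composing two classical automata constructions: first translate $\varphi$ into a nondeterministic B{\"u}chi automaton (NBW), and then determinize that NBW into a deterministic parity automaton. Both steps are effective (algorithmic), so the overall transformation is effective as required; each stage contributes one exponential to the state count, and the determinization is what introduces the parity acceptance condition together with its priorities.

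First I would apply the standard tableau construction of Vardi and Wolper to obtain an NBW $\Automaton_\varphi$ over the alphabet $2^{\AP}$ with $\Language(\Automaton_\varphi) = \{\alpha : \alpha \models \varphi\}$. The states of this automaton are (essentially) the maximal locally consistent sets of subformulas of $\varphi$, refined by a component that cycles through the until-subformulas to guarantee their eventualities are discharged; the B{\"u}chi acceptance condition then enforces exactly that every requested eventuality is eventually satisfied. The key size bound here is that the number of states is singly exponential in $\size{\varphi}$, i.e. $n = 2^{O(\size{\varphi})}$, since a state is determined by a subset of the (linearly many) subformulas together with bounded bookkeeping.

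Next I would determinize $\Automaton_\varphi$ using Safra's construction (or its streamlined variant due to Piterman), which turns an NBW with $n$ states into a deterministic automaton whose states are trees labelled by subsets of $\{1,\dots,n\}$. This blows the state count up to $2^{O(n \log n)}$ and, crucially, yields a parity acceptance condition with $O(n)$ priorities: Piterman's construction produces a parity automaton directly, whereas Safra's produces a Rabin automaton with $O(n)$ pairs, which can then be converted into a parity automaton with $O(n)$ priorities via an index/appearance-record translation. Since determinization preserves the language, the resulting DPW $\Automaton$ satisfies $\Language(\Automaton) = \Language(\Automaton_\varphi) = \{\alpha : \alpha \models \varphi\}$, as claimed.

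The main obstacle lies entirely in the second stage: the correctness of the Safra--Piterman determinization -- that the tree-based bookkeeping faithfully tracks all nondeterministic runs and that the derived parity condition accepts exactly the words admitting a run visiting the B{\"u}chi set infinitely often -- is the delicate part, and is precisely where the $O(n)$ priority bound originates. Assembling the size estimates is then routine: substituting $n = 2^{O(\size{\varphi})}$ into $2^{O(n \log n)}$ gives $2^{2^{O(\size{\varphi})}}$ states, which is doubly exponential in $\size{\varphi}$, while the $O(n) = 2^{O(\size{\varphi})}$ priorities are merely singly exponential, matching the two bounds in the statement.
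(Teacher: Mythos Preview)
Your sketch is correct and follows exactly the standard route that the cited references encode: Vardi's exponential translation from \LTL\ to NBW followed by Piterman's (or Safra's) determinization to a DPW with singly-exponential blow-up in states and a linear number of priorities, giving the doubly-exponential state bound and singly-exponential priority bound overall. The paper itself does not supply a proof of this theorem at all; it simply invokes it as a known result via the citations~\cite{Var95,Pit07}, so your outline in fact provides more detail than the paper does.
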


Thus, we start by translating every \LTL goal $\gamma_a$ into a deterministic parity word (DPW) automaton $\Automaton_a = \tpl{2^\AP, Q_a, q_a^0, \Delta_a, \rho_a}$. Then, for a given $\LEX(\LTL, \mp)$ game $G = \tpl{A, (\upkappa_{a})_{a \in \Ag}, \AP, \lab, (\upgamma_{a})_{a \in \Ag}}$ over the arena $A = \tpl{\Ag,  \Act, \St, \iota, \tr}$, define the arena $A' = \tpl{\Ag,  \Act, \St', \iota', \tr'}$ where
\begin{itemize}
	\item
	$\St' = \St \times \prod_{a \in \Ag} Q_{a}$;
	
	\item
	$\iota' = (\iota, q_{a_1}^0, \ldots, q_{a_n}^0)$;
	
	\item
	For each state $(s, q_{a_{1}}, \ldots, q_{a_{n}})$ in $\St'$ and decision $\delta \in \Act^{\Ag}$, let 
	\[ \tr'((s, q_{a_{1}}, \ldots, q_{a_{n}}), \delta) = (\tr(s, \delta), \Delta_{a_{1}}(q_{a_{1}}, \lab(s)), \ldots, \Delta_{a_{n}}(q_{a_{n}}, \lab(s))).\]
\end{itemize}

Define the $\LEX(\parity, \mp)$ game $G' = \tpl{A', (\upkappa'_{a})_{a \in \Ag}, (\rho'_{a})_{a \in \Ag}}$ where 
\begin{align*}
\upkappa'_{a}(s, q_{a_{1}}, \ldots, q_{a_{n}}) & = \upkappa_{a}(s),\\
\rho'_{a}(s, q_{a_{1}}, \ldots, q_{a_{n}}) & = \rho_{a}(q_{a}).
\end{align*}

Intuitively, the $\LEX(\parity, \mp)$ game $G'$ is the product of the $\LEX(\LTL, \mp)$ game $G$ and the collection of parity automata that recognize the models of each player's $\LTL$ goal. 
Informally, the game executes the original game in parallel with the automata at every step of the game, the arena-component of the product state follows the transition function of the original game~$G$, while the automata-components follow the transition functions of the simulated automata and are updated according to the labelling of the current state of $G$.
As a result, the execution in $G'$ can be recovered from the original execution $\pi$ in the game~$G$ and the unique runs of the (deterministic) automata generated when reading the word $\lab(\pi)$.

Observe that in the translation from~$G$ to its associated~$G'$ the set of actions for each player is unchanged.
This, in turn, means that the set of strategies in both $G$ and $G'$ is the same; indeed, recall from the definitions that strategies are functions from finite sequences of decisions (not states) to actions.
Using this correspondence between strategies in~$G$ and strategies in~$G'$, we can prove the following Proposition, which states an invariance result between $G$ and $G'$ with respect to the satisfaction of players' goals. 

\begin{proposition}[Payoff invariance]
	\label{prp:payinvariance}
	Let $G$ be a $\LEX(\LTL, \mp)$ game and $G'$ its associated $\LEX(\parity, \mp)$ game.
	Then, for every strategy profile $\vec{\upsigma}$ and player $a$, it is the case that $\pay_a(\pi^{G}_{\vec{\upsigma}}) =  \pay_a(\pi^{G'}_{\vec{\upsigma}})$, where by $\pay_a(\pi^{G}_{\vec{\upsigma}})$ we denote the payoff of agent $a$ on the execution $\pi^{G}_{\vec{\upsigma}}$ in $G$ and $\pay_a(\pi^{G'}_{\vec{\upsigma}})$ the payoff of agent $a$ on the execution $\pi^{G'}_{\vec{\upsigma}}$ in $G'$.
\end{proposition}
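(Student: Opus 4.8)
The plan is to leverage the observation, already made when constructing $G'$, that the two games share the same set of agents and actions, so a strategy profile $\vec{\upsigma}$ is literally the same object (a map from decision-histories to actions) in $G$ and in $G'$. I would first show by induction on the round index $i$ that $\vec{\upsigma}$ generates the same decision sequence $\dec_0\dec_1\cdots$ in both games. Indeed, the action chosen by agent $a$ at round $i$ is $\vec{\upsigma}(a)(\dec_0\cdots\dec_{i-1})$, which depends only on the prefix of decisions played so far and not at all on the visited states; since the two executions start with the empty history, an easy induction gives that the $i$-th decision in $\pi^{G}_{\vec{\upsigma}}$ equals the $i$-th decision in $\pi^{G'}_{\vec{\upsigma}}$ for every $i$.

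Writing the states of the induced execution in $G'$ as tuples $s'_i = (t_i, q^{a_1}_i,\ldots,q^{a_n}_i)$, the technical heart of the proof is a second induction showing that $t_i = s_i$, where $s_i$ is the $i$-th state of $\pi^{G}_{\vec{\upsigma}}$, and that $q^{a}_i$ is exactly the $i$-th state of the run of the automaton $\Automaton_a$ on the word $\lab(\pi^{G}_{\vec{\upsigma}}) = \lab(s_0)\lab(s_1)\cdots$. The base case is immediate from $\iota' = (\iota, q^0_{a_1},\ldots,q^0_{a_n})$, and the inductive step unwinds the definition of $\tr'$: its first coordinate applies $\tr$ to $(t_i,\dec_i)$, yielding $s_{i+1}$ by the previous paragraph, while its $a$-coordinate applies $\Delta_a$ to $(q^{a}_i, \lab(t_i)) = (q^{a}_i, \lab(s_i))$, which is precisely the next state of the run of $\Automaton_a$. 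This is the step requiring the most care, since one must keep the indexing consistent --- that $q^{a}_i$ is the run-state reached after reading the first $i$ labels --- so that priorities are read off at the right positions; beyond this bookkeeping, no genuine obstacle arises, as the product arena was designed exactly to make this simulation hold.

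With the correspondence established, both coordinates of the payoff follow routinely. For the mean-payoff coordinate, since $\upkappa'_a$ reads only the original-state component and $t_i = s_i$, the weight sequences $\upkappa'_a(\pi^{G'}_{\vec{\upsigma}})$ and $\upkappa_a(\pi^{G}_{\vec{\upsigma}})$ are identical, whence $\mp_a(\pi^{G'}_{\vec{\upsigma}}) = \mp_a(\pi^{G}_{\vec{\upsigma}})$. For the qualitative coordinate, the priority sequence $\rho'_a(\pi^{G'}_{\vec{\upsigma}}) = \rho_a(q^{a}_0)\rho_a(q^{a}_1)\cdots$ is exactly the priority sequence of the run of $\Automaton_a$ on $\lab(\pi^{G}_{\vec{\upsigma}})$; hence $\parity_a(\pi^{G'}_{\vec{\upsigma}}) = \top$ iff this run is accepting, which by $\Language(\Automaton_a) = \{\alpha : \alpha \models \gamma_a\}$ (Theorem~\ref{thm:LTL to DPW}) holds iff $\lab(\pi^{G}_{\vec{\upsigma}}) \models \gamma_a$, i.e.\ iff $\sat_a(\pi^{G}_{\vec{\upsigma}}) = \top$. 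Combining the two coordinates gives $\pay_a(\pi^{G}_{\vec{\upsigma}}) = \pay_a(\pi^{G'}_{\vec{\upsigma}})$.
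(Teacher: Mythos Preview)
Your proposal is correct and follows essentially the same approach as the paper's proof. The paper states the execution correspondence (first component equals the $G$-state sequence, remaining components are the automaton runs on $\lab(\pi^G_{\vec{\upsigma}})$) as a ``fact which follows by the construction of $G'$'' and then derives the two payoff components exactly as you do; you simply spell out the two inductions (on decisions, then on states) that the paper leaves implicit.
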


\begin{proof}
	We will use the following notation: $\pi = \pi^G_{\vec{\upsigma}}$ and $\pi' = \pi^{G'}_{\vec{\upsigma}}$. It is sufficient to show, for every agent $a$, that $\kappa_a(\pi) = \kappa'_a(\pi')$ and that $\pi \models \gamma_{a}$ iff $\rho'_a(\pi')$ satisfies the parity condition. We use the following fact, which follows by the construction of the game $G'$: there is an exact two-way correspondence between runs in $G$, and runs in $G^\prime$. If $\pi = s_0 \dec_0 s_1 \dec_1 \ldots$, then $\pi' = (s_0,\tup{q}^0) \dec_0 (s_1,\tup{q}^1) \dec_1 \ldots$, where $\tup{q}^i = (q_{a_1}^i,\ldots, q_{a_n}^i) \in \prod_a Q_a$ has the property that $r_a = q_a^0 q_a^1 \ldots$ is the unique run of the DPW $\Automaton_a$ for \LTL goal $\gamma_a$ on input $\lambda(s_0) \lambda(s_1) \cdots$ (for every agent $a$). Conversely, if $\pi^\prime$ is a run in $G^\prime$, then it is easy to see that $\pi$ is the unique run in $G$ where the induced sequence of sets of atomic propositions that arises corresponds to the running of the deterministic parity automata.
	
	Now, for the quantitative part of the payoff, note that by definition of $\kappa'$, $\kappa_a(\pi) = \kappa'_a(\pi')$, as required. Similarly, for the qualitative part of the payoff, note that $\pi \models \gamma_a$ iff the run $r_a$ is accepting (since this is how $\Automaton_a$ was chosen), i.e., $\rho_a(r_a)$ satisfies the parity condition. But $\rho_a(r_a) = \rho'_a(\pi')$ by definition of $\rho'$.
	\hfill \qed
\end{proof}

Proposition~\ref{prp:payinvariance} allows us to prove that the set of strict $\epsilon$-Nash Equilibria in $G$ exactly corresponds to the set of strict $\epsilon$-Nash Equilibria in $G'$.
The following result holds.

\begin{proposition} \label{prop:LTLtoParity}
	Let $G$ be a $\LEX(\LTL, \mp)$ game and $G'$ its associated $\LEX(\parity, \mp)$ game.
	Then $\SEPNE(G) = \SEPNE(G')$.
\end{proposition}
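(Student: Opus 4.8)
The plan is to leverage Proposition~\ref{prp:payinvariance} together with the observation, already recorded above, that $G$ and $G'$ share exactly the same strategies and strategy profiles. Recall that in passing from $G$ to $G'$ neither the agent set $\Ag$ nor the action set $\Act$ changes, and that a strategy is by definition a function $\hist \to \Act$ from histories (finite sequences of decisions) to actions. Consequently a strategy profile $\vec{\upsigma}$ is literally the same object in both games, and the property of being finite-state---which depends only on $\vec{\upsigma}$ as a function and not on the arena---holds in $G$ if and only if it holds in $G'$. Thus it suffices to fix an arbitrary finite-state profile $\vec{\upsigma}$ and show that it satisfies the strict $\epsilon$ Nash-equilibrium condition in $G$ if and only if it does so in $G'$.

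First I would unfold the definition of strict $\epsilon$ Nash-equilibrium: $\vec{\upsigma}$ is such an equilibrium precisely when, for every agent $a \in \Ag$ and every deviation $\upsigma_a' \neq \vec{\upsigma}_a$, we have $\pay_a(\pi_{(\vec{\upsigma}_{-a},\upsigma_a')}) \lex \pay_a(\pi_{\vec{\upsigma}}) + \epsilon$. The entire condition is therefore expressed in terms of the payoffs of two families of strategy profiles: the profile $\vec{\upsigma}$ itself and, for each agent $a$, all one-shot deviations $(\vec{\upsigma}_{-a},\upsigma_a')$. Each of these is again a strategy profile on the common strategy space.

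The key step is to apply Proposition~\ref{prp:payinvariance} to each such profile. That proposition guarantees $\pay_a(\pi^{G}_{\vec{\upsigma}}) = \pay_a(\pi^{G'}_{\vec{\upsigma}})$ and, for every deviation, $\pay_a(\pi^{G}_{(\vec{\upsigma}_{-a},\upsigma_a')}) = \pay_a(\pi^{G'}_{(\vec{\upsigma}_{-a},\upsigma_a')})$, since it holds for every strategy profile. Because the lexicographic order $\lex$ and the operation $(x,y) + \epsilon = (x,y+\epsilon)$ on $\Omega$ are defined identically in both games, the inequality $\pay_a(\pi_{(\vec{\upsigma}_{-a},\upsigma_a')}) \lex \pay_a(\pi_{\vec{\upsigma}}) + \epsilon$ holds in $G$ exactly when it holds in $G'$. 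Quantifying over all agents $a$ and all deviations $\upsigma_a'$, the strict $\epsilon$ Nash-equilibrium condition transfers verbatim between the two games, which, combined with the agreement on the finite-state property, yields $\SEPNE(G) = \SEPNE(G')$.

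The argument is essentially bookkeeping, so there is no deep obstacle; the one point demanding care is that the deviating strategies $\upsigma_a'$ range over \emph{all} strategies, including infinite-memory ones, rather than merely finite-state ones. This is exactly why it matters that Proposition~\ref{prp:payinvariance} is stated for every strategy profile: its proof relies only on the two-way correspondence between executions in $G$ and $G'$, which is insensitive to how the strategies are implemented. Hence the quantification over arbitrary, possibly infinite-memory, deviators poses no difficulty, and the equality of equilibrium sets follows.
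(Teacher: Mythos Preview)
Your proof is correct and follows essentially the same approach as the paper: both arguments reduce the equality $\SEPNE(G) = \SEPNE(G')$ directly to Proposition~\ref{prp:payinvariance}, applying payoff invariance to the equilibrium profile and to each deviation so that the strict $\epsilon$ inequality transfers verbatim between the two games. The paper's version is terser (it argues one inclusion by contrapositive and invokes symmetry), whereas you spell out explicitly why the strategy spaces, the finite-state property, and the range of infinite-memory deviators coincide in $G$ and $G'$; these points are implicit in the paper but worth making explicit.
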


\begin{proof}
	We show one direction (the other is symmetric). Let $\vec{\upsigma}$ be a strategy profile that is not in $\SEPNE(G')$, i.e., there is an agent $a$ and a strategy profile $\vec{\upsigma}'$ such that $\vec{\upsigma}'_a \neq \sigma_a$, $\vec{\upsigma}'_b = \sigma_b$ for $b \neq a$, and $\pay_{a}(\pi^{G'}_{\vec{\upsigma}}) + \epsilon \lexeq \pay_{a}(\pi^{G'}_{\vec{\upsigma}'})$.
	Thus, by applying Proposition~\ref{prp:payinvariance} on both sides of the inequality, we obtain that $\pay_{a}(\pi^G_{\vec{\upsigma}}) + \epsilon \lexeq \pay_{a}(\pi^G_{\vec{\upsigma}'})$, which implies that 
	$\vec{\upsigma} \not \in \SEPNE(G)$.  \qed
\end{proof}

Finally, note that the number of states of $G^\prime$ is doubly exponential in the size of the LTL goals of $G$, and that the number of priorities in $G^\prime$ is singly exponential in the size of the LTL goals of $G$. This will be important when establishing the complexity of the existence problem later in the paper.

\subsection{Two-Agent Zero-Sum Lex(parity,mp)-Games} \label{sec:approxvalues}

We begin with a study of two-agent zero-sum $\LEX(\parity, \mp)$ games.
To simplify notation, we define these as $H = \tpl{A, \upkappa, \uprho}$ where $A$ is an arena with $\Ag = \{1, 2\}$, and $\upkappa,\uprho:\St \to \int$. 
Define $\pay(\pi) = (\parity(\pi), \mp(\pi))$.
Player $1$ is called the ``maximizer'' and player $2$ is called the ``minimizer''. Thus, intuitively, agent $1$ is trying to maximise the value of $\pay(\cdot)$ while agent $2$ is trying to minimize it.

Now, a basic question is; `what is the highest payoff that player 1 can achieve?'. Formally, we can ask what the following quantity is:
\begin{equation*}
	\overline{val(H)} = \sup_{\upsigma}\inf_{\zeta} \pay(\pi_{\langle \upsigma, \zeta \rangle}).
\end{equation*}

Here, $\upsigma$ ranges over strategies of player 1 (the maximizer), $\zeta$ ranges over strategies of player 2 (the minimizer), and $\pi_{\langle\upsigma,\zeta \rangle}$ is the unique execution determined by the strategy profile $\langle\upsigma,\zeta\rangle$. 

Conversely, we can consider the smallest payoff that player 2 can inflict on player 1. That is, the quantity
\begin{equation*}
	\underline{val(H)} = \inf_{\zeta}\sup_{\upsigma} \pay(\pi_{\langle \upsigma, \zeta \rangle}),
\end{equation*}

where $\upsigma, \zeta$ and $\pi_{\langle \upsigma, \zeta\rangle}$ are as above.

In arbitrary, zero-sum two player games, we'd call these two values the maximin and minimax values, and as in arbitrary games, it is easy to verify that $\underline{val(H)} \leq \overline{val(H)}$. Thus, a natural question for such games is under what conditions do these two values coincide -- when do we have $\underline{val(H)} = \overline{val(H)}$? If they do coincide, then we call this the \emph{value} of the game and denote it by $val(H)$. We show that for zero-sum Lex(parity, mp)-games, these two values do indeed coincide, and so the value is well-defined. Moreover, we show it can be computed in \tfnp, the class of total function problems which are solvable in nondeterministic polynomial time \cite{DBLP:journals/tcs/MegiddoP91}. 

To define \tfnp formally, suppose we have an alphabet $\Sigma$ and a left-total binary relation $R: \Sigma^* \times \Sigma^*$ such that for all $x,y \in \Sigma^*$, $R(x,y)$ implies that $\abs{y} \leq p(\abs{x})$ for some polynomial $p$. Moreover, suppose that given $x, y \in \Sigma^*$, we can determine in polynomial time whether $R(x,y)$ holds. Then a natural problem is to ask `given an $x \in \Sigma^*$, find a $y \in \Sigma^*$ such that $R(x,y)$'. The class of all such problems is exactly \tfnp.   

With this terminology in place, we are in a position to prove the following proposition.

\begin{proposition}\label{prop:value}
	Every two-agent zero-sum $\LEX(\parity,\mp)$ game $H$ has a value, denoted 
	$val(H) \in \Omega$. Moreover, this value can be computed in 
	\tfnp.
\end{proposition}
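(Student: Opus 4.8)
The plan is to exploit the lexicographic structure of $\Omega = \{\bot,\top\}\times\mathbb{R}$ to decouple the primary (parity) component from the secondary (mean-payoff) component, reducing the problem to mean-payoff parity games, for which~\cite{CHJ05} supplies both determinacy and a value-computation procedure. First I would solve the pure parity game on the arena $A$ with priority function $\uprho$, in which player~$1$ seeks to satisfy the parity condition and player~$2$ seeks to violate it. By positional determinacy of parity games, the state space $\St$ partitions into player~$1$'s winning region $W_1$ and player~$2$'s winning region $W_2$, and these regions, together with positional winning strategies, are computable within the stated resource bounds. The case $\iota\in W_1$ versus $\iota\in W_2$ fixes the primary component of the value.

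Next I would treat the two cases. If $\iota\in W_1$, then player~$1$ can force the parity condition regardless of player~$2$, so I claim $val(H)=(\top,c)$, where $c$ is the value of the mean-payoff parity game in which player~$1$ maximises $\mp$ subject to satisfying parity. Concretely, player~$1$'s $\epsilon$-optimal strategy from~\cite{CHJ05} stays inside $W_1$ (so parity always holds) and secures $\mp\ge c-\epsilon$, giving $\overline{val(H)}\ge(\top,c)$; dually, player~$2$'s optimal strategy caps $\mp$ at $c$ against every player~$1$ strategy (parity cannot be violated on $W_1$), giving both values $\le(\top,c)$. Symmetrically, if $\iota\in W_2$, then since the complement of a parity condition is again a parity condition (shift every priority by one), I would apply~\cite{CHJ05} to the dual mean-payoff parity game played on $W_2$ in which player~$2$ maximises $-\mp$ subject to violating parity; its value $-d$ yields $val(H)=(\bot,d)$ by the same two-sided argument, player~$2$ forcing $\bot$ with $\mp\le d+\epsilon$ while player~$1$ (unable to escape $W_2$ or to force parity) secures $\mp\ge d$. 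In both cases the identity $\underline{val(H)}=\overline{val(H)}$ follows from the determinacy of the underlying mean-payoff parity game, so $val(H)$ is well-defined.

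For the complexity bound I would observe that the whole computation amounts to solving one parity game and one mean-payoff parity game. The parity winning regions are obtainable within the claimed bounds, and the mean-payoff parity value $c$ (respectively $d$) reduces, following~\cite{CHJ05}, to a mean-payoff value on a graph whose weights are scaled exponentially in the number of priorities, hence computable in time polynomial in the number of states and edges and exponential in the number of priorities and weights. To place the value computation in \tfnp I would exhibit a polynomially verifiable certificate -- the partition $(W_1,W_2)$ together with the positional witnesses certifying the associated mean-payoff(-parity) value -- and verify in polynomial time that the claimed rational (of polynomially bounded bit-length) is the value; totality is guaranteed precisely by the determinacy established above.

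I expect the main obstacle to be the careful two-sided bounding in each case -- in particular, arguing that neither player can profitably trade the qualitative component against the quantitative one (player~$1$ never sacrifices a forced $\top$, and player~$2$ never sacrifices a forced $\bot$), so that the secondary value is exactly the corresponding (dual) mean-payoff parity value -- and ensuring that the $\liminf$ definition of $\mp$, together with the possibly infinite memory required by the mean-payoff maximiser in~\cite{CHJ05}, does not break the value identity; this is where the $\epsilon$-slack and the determinacy of~\cite{CHJ05} must be invoked precisely.
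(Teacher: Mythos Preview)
Your overall approach coincides with the paper's: reduce to mean-payoff parity games from~\cite{CHJ05}, case-split on whether player~$1$ can force the parity condition (your $\iota\in W_1$ versus $\iota\in W_2$ is exactly the paper's $val(K)\neq-\infty$ versus $val(K)=-\infty$), and handle the losing case via a dual construction (shift priorities, negate weights). The paper computes $val(K)$ and $val(K^*)$ on the full arena rather than restricting to winning regions, but this is cosmetic.

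There is one step you omit that the paper supplies. The arena $A$ here is \emph{concurrent} ($\tr:\St\times\Act^{\{1,2\}}\to\St$), so positional determinacy of parity games and the turn-based results of~\cite{CHJ05} do not apply directly; your sentence ``by positional determinacy of parity games, the state space $\St$ partitions into $W_1$ and $W_2$'' is not justified as stated. The paper first splits every transition $s\xrightarrow{(c_1,c_2)}s'$ into $s\xrightarrow{c_1}s_{c_1}\xrightarrow{c_2}s'$ to obtain a turn-based game before invoking those results; you need to insert this reduction. On the dual side, note that ``player~$2$ maximises $-\mp$'' is a $\limsup$ objective (since $-\liminf\avg_n=\limsup(-\avg_n)$), not the $\liminf$ treated in~\cite{CHJ05}; the paper flags this and argues that the proofs of~\cite{CHJ05} go through with $\limsup$ in place of $\liminf$, which is precisely the obstacle you anticipate but leave open.
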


\begin{proof}
	W.l.o.g., we can consider $H$ to be turn-based. Indeed, we can ensure this by replacing every transition $s \xrightarrow{(c_1,c_2)} s'$ by two transitions $s \xrightarrow{c_1} s_{c_{1}} \xrightarrow{c_2} s'$, in a way that all the original states belong to Player 1, while every extra state $s_{c}$ belongs to Player 2, and has the same weight and priority as $s$. Note that such a construction depends on the ordering of players, {\em i.e.}, in order to compute the value for Player 2, we need to employ a construction of a game $H''$ that replaces $s \xrightarrow{(c_1,c_2)} s'$ by $s \xrightarrow{c_2} s_{c_{2}} \xrightarrow{c_1} s'$. It is easy to verify that if we have a run $\pi$ in a non-turn-based game, $H$, and transform it to a turn-based game, $H^\prime$, then the corresponding run $\pi^\prime$ will induce the same payoff as in $\pi$. 
		
	We compute $\overline{val(H)}$ and $\underline{val(H)}$ by reducing to solving two-agent turn-based zero-sum games $K$ with mean-payoff parity objectives~\cite{CHJ05}, which are known to be in \np~\cite{BCHJ09}. We show that these two values are equal, and thus, $val(H)$ is well defined. Moreover, we require two parallel invocations of an \np algorithm, and so the whole process lies within \tfnp.
	
	The mean-payoff parity games $K$ are played on the same weighted arenas $\left<A,\upkappa,\uprho\right>$ as two-agent zero sum $\LEX(\parity,\mp)$ games. However, the payoff set for $K$ is $\overline{\mathbb{R}} = \mathbb{R} \cup \{\pm \infty\}$ with its usual ordering $<$, and payoff function $\pay^+:\exec \to \overline{\mathbb{R}}$ is defined as follows: $\pay^+(\pi)$ equals $-\infty$ if $\parity(\pi) = \bot$, and $\mp(\pi)$ otherwise. 
	Informally, the first player is trying to satisfy the parity condition, and once that holds, maximise its mean-payoff. The \emph{value} $val(K)$ is defined to be the maximum payoff that the first player can achieve. It follows from Theorems 2 and 3 of ~\cite{CHJ05} that values exist for these games and can be computed. Moreover, such computation requires nondeterministic polynomial time and lies in \np \cite{BCHJ09}.
	
	To help us calculate the two values of interest, we also introduce an auxillary game, which we call the \emph{dual} of $K$, which we denote $K^*$. In $K^*$, the first player tries to satisfy the parity condition and if this \emph{doesn't} hold, then tries to maximise their mean-payoff. Formally, these are games $K^* = \left<A,\upkappa,\uprho\right>$ with payoff function defined as follows: $\pay^*(\pi)$ equals $\infty$ if $\parity(\pi) = \top$, and $\mp(\pi)$ otherwise. 
	
	We argue that these auxillary games also have a well-defined value, $val(K^*)$. From player 2's perspective, their payoff is $-
	\infty$ if the parity condition holds and $-\mp(\pi)$ otherwise. Suppose we negated all the weights, and added 1 to each priority. Then in this new game, player 2 is trying to first satisfy the parity condition, then trying to maximise $\limsup_{n \to \infty}(\avg_n(\kappa(\pi)))$. Thus, we can \emph{almost} view $K^*$ as a mean-payoff parity game, but not quite. Indeed, we certainly can't directly reduce it to a mean-payoff parity game as it stands. However, consider the line of argument in \cite{CHJ05} that leads to Theorems 2 and 3 -- for every line in this proof, if we use the $\limsup$ rather than the $\liminf$ for player $1$, the results still hold. Thus, we can conclude that these auxiliary games also have a value.
	
	With this machinery in place, we compute the value of two-agent zero-sum $\LEX(\parity, \mp)$ games as follows.
	Let $K$ be the mean-payoff parity game on the same weighted arena as $H$. Then we are in exactly one of two scenarios: either $val(K) \neq -\infty$, or $val(K) = -\infty$.
	
	First suppose that $val(K) \neq -\infty$. Thus, player 1, using some strategy $\upsigma$, can ensure the parity condition is satisfied, and given this, the greatest payoff they can attain is $val(K)$. This is turn implies that $\overline{val(H)} = (\top, val(K))$. Additionally, $val(K) \neq -\infty$ also implies that player 2 cannot force the parity condition to not hold, and the lowest payoff they can inflict on player 1 is $val(K)$. This implies that $\underline{val(H)} = (\top, val(K))$. Putting this together, we get $\overline{val(H)} = \underline{val(H)}$. Thus, $val(H)$ exists and is equal to $(\top, val(K))$. 

	Now suppose that $val(K) = -\infty$. If this is the case, then we necessarily have $val(K^*) \neq -\infty$. Thus, player 2 can force that the largest priority occurring infinitely often is odd and can also ensure the smallest payoff player 1 achieves is $val(K^*)$. Thus, we have $\underline{val(H)} = (\bot, val(K^*))$. Similarly, player 1 cannot force the parity condition to be true, and given this, the greatest payoff they can achieve if $val(K^*)$. Thus, we have $\overline{val(H)} = (\bot, val(K^*))$. Again, this implies that $val(H)$ exists and is equal to $(\bot, val(K^*))$

	Regarding the complexity, observe that the construction of the games $K$ and $K^*$ is linear in the size of $H$, and that we employ an \tfnp procedure to solve them. Once we have calculated the values of both, this will tell us the value of $H$. Since the two procedures are independent of one another, we can do both of them sequentially and then compare their values afterwards. Thus, this guarantees that the overall complexity computing $val(H)$ is \tfnp. \qed
\end{proof}

It is worth noting that in mean-payoff parity games, computing the value of the game can be done in time $O\left(n^d \cdot \left(m + \textsf{Parity} + \textsf{MP}\right)\right)$ \cite{CHJ05}. Here, $n$ is the number of vertices in the game graph, $m$ the number of edges and $d$ the number of priorities. Aditionally, \textsf{Parity} and \textsf{MP} are the complexities of solving parity and mean-payoff games respectively. We will use this fact later when we calculate the overall complexity of the \SEPNE-existence problem.

It is not hard to see that in two-player zero-sum $\LEX(\parity,\mp)$-games $H$, a player may need infinite memory to achieve the optimal value $val(H)$ (Cfr.~\cite[Figure $1$]{CHJ05}). However, as proven in~\cite{BCHJ09}, for every mean-payoff parity game $K$ and every $\epsilon > 0$, there exists a finite-state strategy for the minimizer, $\zeta$ (that depends on $\epsilon$) such that for every strategy $\upsigma$ of the maximizer, it holds that $\pay(\pi_{\upsigma,\zeta}) \leq val(K) + \epsilon$. Thus, using the same argument as in Proposition~\ref{prop:value}, we get:

\begin{proposition}\label{prop:approx}
	For every two-agent zero-sum $\LEX(\parity,\mp)$ game $H$ and every $\epsilon > 0$ there exists a finite-state strategy $\zeta$ for the minimizer, such that for every strategy $\upsigma$ of the maximizer (not necessarily finite-state), it holds that $\pay(\pi_{\upsigma,\zeta}) \lexeq val(H) + \epsilon $.
\end{proposition}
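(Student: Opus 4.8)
The plan is to mirror exactly the case distinction already carried out in the proof of Proposition~\ref{prop:value}, and in each case to produce the required finite-state strategy $\zeta$ by invoking the finite-memory approximation result of~\cite{BCHJ09} for the minimizer --- once on the mean-payoff parity game $K$ and once on its dual $K^*$. Recall from that proof that $K$ and $K^*$ are built on the same arena as $H$, that $val(H) = (\top, val(K))$ when $val(K) \neq -\infty$, and that $val(H) = (\bot, val(K^*))$ when $val(K) = -\infty$. So the whole argument reduces to handling these two cases.

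First I would treat the case $val(K) \neq -\infty$, where $val(H) = (\top, val(K))$. Here I apply~\cite{BCHJ09} directly to $K$ to obtain a finite-state minimizer strategy $\zeta$ with $\pay^+(\pi_{\upsigma,\zeta}) \leq val(K) + \epsilon$ for every maximizer strategy $\upsigma$, and then verify that the same $\zeta$ works in $H$. Indeed, for any $\upsigma$: if $\parity(\pi_{\upsigma,\zeta}) = \bot$ then $\pay(\pi_{\upsigma,\zeta})$ is of the form $(\bot,\cdot)$, which is $\lex (\top, val(K)+\epsilon) = val(H)+\epsilon$; and if $\parity(\pi_{\upsigma,\zeta}) = \top$ then $\pay^+(\pi_{\upsigma,\zeta}) = \mp(\pi_{\upsigma,\zeta}) \leq val(K)+\epsilon$, so $\pay(\pi_{\upsigma,\zeta}) = (\top, \mp(\pi_{\upsigma,\zeta})) \lexeq (\top, val(K)+\epsilon) = val(H)+\epsilon$. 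Either way the bound $\pay(\pi_{\upsigma,\zeta}) \lexeq val(H)+\epsilon$ holds.

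Next I would treat $val(K) = -\infty$, where $val(H) = (\bot, val(K^*))$. Since the maximizer cannot force the parity condition, the (determined) minimizer can force parity to fail, and when parity fails the payoff collapses to the bounded mean-payoff; hence $val(K^*)$ is finite. The key step is to obtain the analogue of the~\cite{BCHJ09} guarantee for the dual game $K^*$: a finite-state minimizer strategy $\zeta$ with $\pay^*(\pi_{\upsigma,\zeta}) \leq val(K^*)+\epsilon$ for all $\upsigma$. This follows from the same $\limsup$-adaptation of the argument already used in Proposition~\ref{prop:value} to give $K^*$ a value; equivalently, I would negate the weights and increment every priority by one so that $K^*$ becomes a genuine mean-payoff parity game with the two players' roles swapped, apply~\cite{BCHJ09} there, and transfer the resulting finite-state strategy back to $H$. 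Given such a $\zeta$, the inequality $\pay^*(\pi_{\upsigma,\zeta}) \leq val(K^*)+\epsilon < \infty$ forces $\parity(\pi_{\upsigma,\zeta}) = \bot$ (as satisfying parity would give $\pay^* = \infty$), so $\pay(\pi_{\upsigma,\zeta}) = (\bot, \mp(\pi_{\upsigma,\zeta}))$ with $\mp(\pi_{\upsigma,\zeta}) = \pay^*(\pi_{\upsigma,\zeta}) \leq val(K^*)+\epsilon$, whence $\pay(\pi_{\upsigma,\zeta}) \lexeq (\bot, val(K^*)+\epsilon) = val(H)+\epsilon$, as required.

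The hard part will be this last transfer: lifting the finite-memory approximation guarantee of~\cite{BCHJ09} from ordinary mean-payoff parity games to the dual game $K^*$. I must check that the $\limsup$-adaptation (or the negate-and-shift transformation) preserves not merely the existence of the value --- which Proposition~\ref{prop:value} already establishes --- but also the \emph{finiteness} of the approximating strategy, and that the turn-based reduction used there translates a finite-state strategy back into a bona fide finite-state strategy of player~$2$ in the concurrent game $H$. By contrast, the ancillary facts --- that a sub-$\infty$ value of $\pay^*$ forces parity to fail, and that $val(K^*)$ is finite when $val(K) = -\infty$ --- are routine, since the weights are bounded.
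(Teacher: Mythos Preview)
Your proposal is correct and follows exactly the approach the paper takes: the paper's own proof is a two-line sketch that cites the finite-memory $\epsilon$-optimality result for the minimizer from~\cite{BCHJ09} and then says ``using the same argument as in Proposition~\ref{prop:value}'', which is precisely the $K$/$K^*$ case split you have written out in full. Your explicit verification in each case, and your flagging of the need to carry the~\cite{BCHJ09} guarantee over to the dual game $K^*$, simply spell out what the paper leaves implicit.
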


\subsection{Reducing Equilibrium Finding to Path Finding} \label{sec:characterisingSNE}

In this section we show that a path $\pi$ is generated by some $\vec{\upsigma} \in \SEPNE(G)$ iff $\pi$ exists in a certain subgraph of the weighted arena of $G$.
To do this, we adapt the proof in~\cite[Section $6$]{UW11} that shows how to decide the existence of a (not necessarily finite-state) Nash equilibrium 
for mean-payoff games.

We first need the notion of punishing values and strategies.
For $a \in \Ag$ and $s \in \St$ define the \emph{punishing value} $p_a(s)$ to be the $\lexeq$-largest $(x,y)$ that player $a$ can achieve from state $s$ by playing against the coalition $\Ag \setminus \{a\}$, i.e., by turning the game into a two-player zero-sum game in which the maximizer simulates the moves of player $a$, the minimizer simulates the moves of the coalition $\Ag \setminus \{a\}$, and the payoff is that of player $a$. These values can be computed for every player in each state by constructing the appropriate two-agent Lex(parity, mp)-game, $H$, and invoking Proposition~\ref{prop:value}. Formally, if $G = \langle A, (\kappa_a), (w_a) \rangle$, and we want to calculate the punishing value for player $i$ in state $s^\prime$, then $H = \langle A^\prime, \kappa_i, w_i \rangle$, where $\Ag^\prime = \{ i, \Ag \setminus \{i\} \}$, $\Act^\prime = \Act$, $\St^\prime = \St$, $\iota^\prime = s^\prime$ and $\tr^\prime(s, (\dec_i, (\dec_1, \ldots, \dec_{i-1}, \dec_{i+1}, \ldots, \dec_\Ag))) = \tr(s, (\dec_1, \ldots, \dec_\Ag))$. 

Moreover, for every state $s \in \St$, every player $a \in \Ag$ and every $\epsilon > 0$, fix $\zeta_{s,a}^{\epsilon}$ to be the strategy of the minimizer described by Proposition~\ref{prop:approx}. We view $\zeta_{s,a}^{\epsilon}$ as a profile, {\em i.e.}, $\zeta_{s,a}^{\epsilon}:\Ag \setminus \{a\} \to (\hist \to \Act)$, and call $\zeta_{s,a}^{\epsilon}(b)$ an \emph{$\epsilon$-punishing strategy} for agent $b$. Note that these $\epsilon$-punishing strategies are finite-state.

\begin{definition}[Secure values]\footnote{This definition extends the one provided in~\cite{UW11} which considers mean-payoff games.}
	\label{def:zsecure}
	For an agent $a \in \Ag$ and $z \in \Omega$, a pair $(s,\dec) \in \St \times \Act^{\Ag}$ is \emph{$z$-secure for $a$} if $p_a(\tr(s,\dec')) \lexeq z$ for every $\dec' \in \Act^{\Ag}$ that agrees with $\dec$ except possibly at $a$.
\end{definition}

With this definition in place, we can now state the following result. 

\begin{proposition}
	\label{prop:NEchar}
	For every \LEX(\parity, \mp) game $G$, constant $\epsilon \geq 0$, and ultimately periodic path $\pi = s_0 \dec_0 s_1 \dec_1 \ldots$ in $G$, the following are equivalent:
	\begin{enumerate}
		\item There exists $\vec{\upsigma} \in \SEPNE(G)$ such that $\pi = \pi_{\vec{\upsigma}}$.
		\item There exists $\bar{z} \in \Omega^{|\Ag|}$, where 
		$z_a \in \{p_a(s) : s \in \St\}$, $a \in \Ag$, such that for every agent $a$, 
		\begin{enumerate}
			\item
			$z_a \lex \pay_a(\pi) + \epsilon$ and 
			\item
			for all $i \in \mathbb{N}$, the pair $(s_i,\dec_i)$ is $z_a$-secure for $a$.
		\end{enumerate} 
	\end{enumerate}
\end{proposition}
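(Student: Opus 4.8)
The plan is to prove the two implications separately, following the punishment-strategy characterisation of Nash equilibria used for mean-payoff games in~\cite{UW11} and adapting it to the lexicographic setting with the approximate minimiser strategies supplied by Proposition~\ref{prop:approx}. A fact I would use throughout is that both $\parity_a(\cdot)$ and $\mp_a(\cdot)$ are tail properties, so the payoff of a play is unaffected by any finite prefix; this is what lets a one-shot deviation be analysed purely through the state it leads to, and it is the reason the punishing value of that state controls the deviation's payoff.

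For $(1) \Rightarrow (2)$, assume $\vec{\upsigma} \in \SEPNE(G)$ generates $\pi$. For each agent $a$ I would set $z_a$ to be the $\lexeq$-largest element of the set $\{\, p_a(\tr(s_i,\dec_i')) : i \in \mathbb{N},\ \dec_i' \text{ agrees with } \dec_i \text{ except possibly at } a \,\}$. This set is finite, since $\pi$ is ultimately periodic and $\Act$ is finite, so the maximum exists and lies in $\{p_a(s) : s \in \St\}$, as required. Condition (2)(b) then holds by the very choice of $z_a$ as a maximum (Definition~\ref{def:zsecure}). For (2)(a) I argue by contradiction: if $\pay_a(\pi) + \epsilon \lexeq z_a = p_a(s^\ast)$ where $s^\ast = \tr(s_i,\dec_i')$ witnesses the maximum, then $a$ deviates profitably by following $\pi$ up to step $i$, switching its action so the decision becomes $\dec_i'$ and state $s^\ast$ is reached, and from there playing an optimal maximising strategy for the zero-sum game defining $p_a(s^\ast)$. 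Because the maximiser attains the value of these games (possibly with infinite memory, which deviations are permitted to use) against \emph{every} opponent strategy, in particular against the fixed continuation of $\vec{\upsigma}_{-a}$, this secures a payoff that is $\lexeq$-at-least $p_a(s^\ast)=z_a$, hence at least $\pay_a(\pi)+\epsilon$, contradicting the strict $\epsilon$ equilibrium inequality. Thus $z_a \lex \pay_a(\pi) + \epsilon$.

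For $(2) \Rightarrow (1)$, given such a $\bar z$ I would build the standard trigger profile $\vec{\upsigma}$: on every history consistent with $\pi$ each agent plays the corresponding action of $\dec_i$, so that $\pi_{\vec{\upsigma}} = \pi$; and as soon as a single agent $a$ first deviates, reaching some state $s$, all other agents switch to the punishing profile $\zeta_{s,a}^{\epsilon'}$ of Proposition~\ref{prop:approx} for a parameter $\epsilon'>0$ fixed below (simultaneous deviations, which never arise under a unilateral deviation, are handled arbitrarily). Consider any $\upsigma_a' \neq \vec{\upsigma}_a$ with resulting play $\pi'$. If $\pi'=\pi$ then for $\epsilon>0$ we have $\pay_a(\pi) \lex \pay_a(\pi)+\epsilon$ directly. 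Otherwise $\pi'$ leaves $\pi$ at a first step $i$ with a decision $\dec_i'$ agreeing with $\dec_i$ except at $a$, reaching $s = \tr(s_i,\dec_i')$; by the $z_a$-secure condition $p_a(s) \lexeq z_a$, and since the payoff is a tail property, $\pay_a(\pi')$ equals the payoff of the punished subgame from $s$, which by Proposition~\ref{prop:approx} is $\lexeq p_a(s)+\epsilon' \lexeq z_a+\epsilon'$.

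It remains to convert $z_a+\epsilon'$ into the required strict bound $\pay_a(\pi') \lex \pay_a(\pi)+\epsilon$, and this $\epsilon$-bookkeeping is the step I expect to be the main obstacle. Writing $z_a=(x_a,y_a)$ and $\pay_a(\pi)=(x_\pi,y_\pi)$, there are two cases governed by (2)(a). When the parity components differ, (2)(a) forces $x_a=\bot$ and $x_\pi=\top$, so $z_a+\epsilon' \lex \pay_a(\pi)+\epsilon$ for every $\epsilon'$. When they agree, (2)(a) yields a strictly positive mean-payoff slack $\mu_a = (y_\pi+\epsilon)-y_a > 0$; choosing $\epsilon' < \min_a \mu_a$, which is possible because there are finitely many agents and each $z_a$ ranges over a finite value set, gives $z_a+\epsilon' \lex \pay_a(\pi)+\epsilon$ and hence $\pay_a(\pi') \lex \pay_a(\pi)+\epsilon$. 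This slack argument is exactly why the construction fixes approximate punishing strategies for \emph{every} $\epsilon'>0$ rather than a single one, and it is the delicate point where the strictness of the equilibrium, the lexicographic order, and the mere $\epsilon$-optimality of the minimiser must be reconciled.
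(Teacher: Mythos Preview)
Your argument follows the paper's proof closely: the $(1)\Rightarrow(2)$ direction is essentially identical (take $z_a$ to be the maximum punishing value reachable by a one-step deviation along $\pi$, obtain (2)(b) by definition, and obtain (2)(a) by contradiction using prefix-independence and the existence of an optimal maximiser strategy from Proposition~\ref{prop:value}), and the $(2)\Rightarrow(1)$ direction builds the same trigger-and-punish profile.

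The one genuine difference is your $\epsilon$-bookkeeping in $(2)\Rightarrow(1)$. The paper instantiates the punishing profiles with the \emph{same} $\epsilon$ as in the statement and then claims that Proposition~\ref{prop:approx} yields $\pay_a(\pi') \lex p_a(s)$, from which the contradiction $p_a(s)\lex p_a(s)$ is derived. But Proposition~\ref{prop:approx} only gives $\pay_a(\pi') \lexeq p_a(s)+\epsilon$, and the resulting chain $z_a \lex \pay_a(\pi)+\epsilon \lexeq \pay_a(\pi') \lexeq p_a(s)+\epsilon \lexeq z_a+\epsilon$ is not contradictory. Your introduction of an auxiliary $\epsilon'>0$, chosen strictly below the mean-payoff slack $\min_a \mu_a$, together with the case split on whether the parity components of $z_a$ and $\pay_a(\pi)$ coincide, is exactly the patch that makes the argument go through; on this point your proof is in fact more careful than the paper's.

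Two residual gaps. First, you handle the case $\pi'=\pi$ only for $\epsilon>0$; at $\epsilon=0$ any off-path modification $\upsigma'_a\neq\vec\upsigma_a$ still induces $\pi'=\pi$, whence $\pay_a(\pi')\lex\pay_a(\pi)$ fails and the constructed profile is not a strict $0$-Nash equilibrium. The paper shares this gap (it simply asserts $\pi'\neq\pi$ without justification), and indeed the stated equivalence is problematic at $\epsilon=0$; you should at least flag this explicitly rather than leave it implicit. Second, you omit the verification that the trigger profile $\vec\upsigma$ is \emph{finite-state}, which is part of the definition of $\SEPNE(G)$; the paper does check this, using that $\pi$ is ultimately periodic and that each punishing strategy $\zeta^{\epsilon'}_{s,a}$ is finite-state by Proposition~\ref{prop:approx}.
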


\begin{proof}
	Fix a game $G$, constant $\epsilon \geq 0$ and ultimately periodic path $\pi = s_0 \dec_0 s_1 \dec_1 \ldots$.
	
	For (1) implies (2),  suppose there exists $\vec{\upsigma} \in \SEPNE(G)$ with $\pi = \pi_{\vec{\upsigma}}$.
	Define $\bar{z} \in \Omega^{|\Ag|}$ by $z_a = \max\{p_a(\tr(s_n, \dec'_n)) : n \in \mathbb{N}, \wedge_{b \neq a} \dec'_n(b) = \dec_n(b)\}$,
	{\em i.e.}, $z_a$ is the largest value player $a$ can get by deviating from $\pi$.
	For every $n \in \mathbb{N}$, $(s_n, \dec_n)$ is $z_a$-secure for $a$ (by definition of $z_a$ and $z_a$-secure).
	Moreover, $z_a \lex \pay_a(\pi) + \epsilon$: indeed, let $n$ be such that $z_a = p_a(\tr(s_n, \dec'_n))$, and suppose that $\pay_a(\pi) + \epsilon \lexeq z_a$; then player $a$ would deviate at step $n$ by playing $\dec'_n(a)$ and following a strategy that achieves at least $z_a$ from this point.
	Note that such a (possibly infinite-state) strategy exists by Proposition~\ref{prop:value}.
	But, due to prefix-independence of the payoff function, this is also the payoff of the whole play, contradicting the choice of $\pi$ as the execution of a strict $\epsilon$ Nash-equilibrium. 
	
	For (2) implies (1), let $\bar{z} \in \Omega^{|\Ag|}$ be given with the stated properties. 
	We build a strict $\epsilon$ Nash-equilibrium $\vec{\upsigma}$ such that $\pi_{\vec{\upsigma}} = \pi$. 		
	For $b \in \Ag$, we define $\vec{\upsigma}(b)$ as follows.
 	For every history $h = \dec_0 \ldots \dec_n$ ({\em i.e.}, a decision prefix of $\pi$), define $\upsigma_{b}(h) = \dec_{n}(b)$. Thus, $\vec{\upsigma}$ follows $\pi$ as long as no-one has deviated from $\pi$.

	For every other history, $h^\prime = \dec_0^\prime \ldots \dec_n^\prime$, let $k$ be the first such integer with $\dec_k^\prime \neq \dec_k$. There are two cases - either $\dec_k$ differs in one position, or multiple positions. If $\dec_k$ differs in one position, say by player $a$, then let $s_{k+1}^\prime = \tr(s_k, \dec_k^\prime)$ and then for all other players $b$, set $\sigma_b(h) = \zeta_{s_{k+1}^\prime, a}^\epsilon(b)(h)$. If $\dec_k$ differs in multiple positions, then set $\sigma_b(h)$ arbitrarily. 

	By construction, we have that $\pi_{\vec{\upsigma}} = \pi$. We now aim to show that $\vec{\upsigma} \in \SEPNE(G)$. Suppose it is not. Then some player $a$ has a strategy, $\sigma_i$, such that $\pay_a(\pi_{\vec{\upsigma}}) + \epsilon \lexeq \pay_a(\pi_{(\upsigma_{-a}, \sigma_i)})$. By assumption, this implies that $z_a \lex \pay_a(\pi_{(\upsigma_{-a}, \sigma_i)})$. Moreover, we have that $\pi_{(\upsigma_{-a}, \sigma_i)} \neq \pi_{\vec{\upsigma}}$. Thus, let $(s_j, \dec_j^\prime)$ be the first pair from the execution of $\pi_{(\upsigma_{-a}, \sigma_i)}$ that differs from the execution of $\pi_{\vec{\upsigma}}$ (note that the $j^{\text{th}}$ state of both executions is the same). Additionally, let $\tr(s_j, \dec_j^\prime) = s_{j+1}^\prime$.
	
	Now, by assumption, the pair $(s_j, \dec_j)$ is $z_a$-secure for $a$. This implies that $p_a(s_{j+1}^\prime) \lexeq z_a$, in turn implying we have $p_a(s_{j+1}^\prime) \lex \pay_a(\pi_{(\upsigma_{-a}, \sigma_i)})$. However, by construction, for all players $b \neq a$, we have $\sigma_b(h) = \sigma_{s_{j+1}^\prime, a}^\epsilon(b)(h)$ for all histories with the prefix $\dec_0\ldots\dec_{j-1}\dec_j^\prime$. By Proposition~\ref{prop:approx}, this implies that $\pay_a(\pi_{(\upsigma_{-a}, \sigma_i)}) \lex p_a(s_{j+1}^\prime)$. But then we can conclude that $p_a(s_{j+1}^\prime) \lex p_a(s_{j+1}^\prime)$, which is a contradiction. Thus, we may conclude that $\vec{\upsigma} \in \SEPNE(G)$.
	
	Finally, it is easy to verify that $\vec{\upsigma}$ is a finite-state strategy. Since $\pi$ is ultimately periodic, the `main body' of $\vec{\upsigma}$ is finite-state. Moreover, tracking whether any player has deviated yet requires only finite memory. If a player has deviated, then a punishing strategy is used by all remaining players, which is also finite-state.\qed
\end{proof}

\subsection{Path Finding in Multi-Weighted Graphs with LEX(parity,mp) Payoffs} \label{sec:pathfinding} 

The following theorem, of interest in its own right, will be used to decide the existence of ultimately periodic paths in Proposition~\ref{prop:NEchar}.
A \emph{multi-weighted graph} is a structure of the form $\mathcal{G} = (V,E,(w_a)_{a \in A}, (p_a)_{a \in A})$ where $V$ is a finite set of states, $E \subseteq V^2$ a set of edges, $A$ is a finite index set, and $w_a,p_a: V \to \int$ are functions, one for each $a \in A$, mapping states to integers.

\begin{theorem}
	\label{thm:pathfind}
	Given a multi-weighted graph $\mathcal{G} = (V, E, A, (w_a)_{a \in A}, (p_a)_{a \in A})$ over the finite index set $A$, a starting vertex $\init \in V$, and a vector of payoffs $f \in \Omega^{A}$, one can decide in nondeterministic polynomial time whether there exists an ultimately periodic path $\pi = v_0 v_1 \cdots$ in the graph with $\init = v_0$ and, for every $a \in A$, $f_a \lex \pay_a(\pi)$.
\end{theorem}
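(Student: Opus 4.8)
The plan is to characterise the existence of the desired ultimately periodic path $\pi$ by a nondeterministically guessed combinatorial certificate that is verified by a polynomial-size linear program, adapting the circulation/LP method for mean-payoff graphs of~\cite{KS88}. The starting observation is that $\pay_a(\pi)$ is prefix-independent: if $\pi = u \cdot c^\omega$ is a lasso with loop $c$, then $\mp_a(\pi)$ equals the average of $w_a$ over one period of $c$, while $\parity_a(\pi)$ is determined solely by the set $S$ of vertices visited infinitely often, namely $\parity_a(\pi) = \top$ iff $\max_{v \in S} p_a(v)$ is even. Hence only the loop matters for the payoff, and the prefix $u$ is needed merely to reach $S$ from $\init$.

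First I would remove the lexicographic disjunctions by guessing. Writing $f_a = (x_a, y_a)$ and unfolding $f_a \lex \pay_a(\pi)$ for the total order $\lex$ on $\Omega$: if $x_a = \top$ then necessarily $\parity_a(\pi) = \top$ and $\mp_a(\pi) > y_a$; if $x_a = \bot$ then either $\parity_a(\pi) = \top$ (with no constraint on $\mp_a$) or $\parity_a(\pi) = \bot$ and $\mp_a(\pi) > y_a$. I would therefore nondeterministically guess, for each $a$, a target parity bit $b_a \in \{\top,\bot\}$ together with a flag recording whether a strict mean-payoff threshold $\mp_a > y_a$ is active (forced active when $x_a = \top$; for $x_a = \bot$ the choice $b_a = \top$ deactivates it, while $b_a = \bot$ activates it). Let $C \subseteq A$ collect the agents with an active threshold. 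Crucially, every resulting mean-payoff constraint is strict.

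Next I would guess the set $S \subseteq V$ of infinitely visited vertices (one bit per vertex) and verify in polynomial time that (i) $S$ is reachable from $\init$, (ii) the subgraph induced on $S$ is strongly connected, and (iii) for each agent $a$, $\max_{v\in S} p_a(v)$ has parity $b_a$. It then remains to produce a single closed walk inside $S$ that visits every vertex of $S$ (so that $S$ is exactly the infinitely visited set and the priorities are as checked) and whose per-period average of $w_a$ exceeds $y_a$ for each $a \in C$. I would encode this as a circulation LP on the edges $E_S = E \cap (S\times S)$: seek reals $x_e \ge 0$ with flow conservation at every $v \in S$, normalisation $\sum_e x_e = 1$, and $\sum_e x_e\,(w_a(\mathrm{src}(e)) - y_a) > 0$ for each $a \in C$ (assigning each edge the weight of its source vertex, so that flow conservation makes this the time-averaged vertex weight). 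Strict feasibility is decided in polynomial time by introducing a slack $\delta$, replacing each strict inequality by $\ge \delta$, maximising $\delta$, and testing $\delta^* > 0$.

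Finally I would argue both directions of the equivalence together with the reconstruction. For the forward direction, a genuine path $\pi$ supplies $S$ as its infinitely visited set, supplies the guesses $b_a$ and $C$ by inspecting $\pay_a(\pi)$, and supplies a feasible $x$ by normalising the edge-usage counts of its loop. For the backward direction, given a feasible $x$, I would add an $\eta$-small multiple of a fixed circulation supported on a closed walk $w^*$ that spans the strongly connected $S$ (such a walk exists by strong connectivity); since the mean-payoff inequalities are strict, small $\eta$ preserves them while forcing the support to cover all of $S$. Clearing denominators yields an integer, flow-conserving, connected multigraph with equal in- and out-degrees at each vertex, hence an Eulerian closed walk $c$; this $c$ visits all of $S$, realises the prescribed averages, and prepending any finite path from $\init$ into $S$ produces the required $\pi$. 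The whole procedure runs in nondeterministic polynomial time. I expect the main obstacle to be precisely this reconstruction step: certifying that a feasible fractional circulation can be turned into one genuine closed walk that covers all of $S$ while still strictly meeting every mean-payoff threshold, which is exactly where the coverage perturbation and the Eulerian argument are needed.
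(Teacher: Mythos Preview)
Your proposal is correct and takes a genuinely different route from the paper's proof.

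The paper, following \cite{KS88} more closely, nondeterministically guesses only the vector $P \in \int^A$ of top priorities seen infinitely often, prunes vertices exceeding these priorities, and then reduces to finding an $\upeta$-cycle (a cycle hitting each $P_a$ and with positive total weight for every $a$). The gap between the LP, which naturally certifies an $\upeta$-\emph{multicycle}, and a single $\upeta$-\emph{cycle} is bridged by the recursive equivalence-relation construction of \cite{KS88}: two vertices are $\equiv$-equivalent if some $\upeta$-multicycle has a component cycle containing both, and one shows that index-$1$ classes yield a single $\upeta$-cycle by repeatedly splicing components. By contrast, you guess the entire set $S$ of infinitely visited vertices up front, check strong connectivity and the parity of $\max_{v\in S} p_a(v)$ directly, and close the multicycle-to-cycle gap with a perturbation: adding an $\eta$-small spanning circulation forces connected support without violating the strict mean-payoff inequalities, and an Eulerian circuit then gives the single closed walk.

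Your decomposition is arguably more transparent: it avoids the recursive $\equiv$-machinery entirely and isolates the only delicate step (turning a fractional circulation into a single spanning walk) into a clean perturbation-plus-Euler argument, which works precisely because all mean-payoff thresholds are strict. The paper's approach guesses fewer bits ($|A|$ priority values rather than a $|V|$-bit subset) and stays closer to the existing zero-cycle literature, which makes the adaptation from \cite{KS88} more visible; both are polynomial certificates, so this difference is cosmetic for the NP bound. One minor point worth making explicit in your write-up: the forward direction relies on the observation that the set of vertices on the loop of a lasso induces a strongly connected subgraph (since the loop itself witnesses reachability between any two of its vertices using only edges in $E_S$), which you use but do not state.
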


\begin{proof}	
	W.l.o.g., we may assume that $f_a \in \{\top,\bot\} \times \{0\}$ (to see this, redefine $w_a(s)$ to be $w_a(s) - f_a$ for all $s \in V, a \in A$). Also, we may assume that every state in $V$ is reachable from $\init$ (to see this, restrict $V$ to the states reachable from $\init$, computable in linear time). Finally, nondeterministically guess a vector $P \in \int^A$. This vector represents the top priorities visited infinitely often for each index $a \in A$, and thus also determines the exact form of $f$. 

	Consider the subgraph, $\mathcal{G}^\prime$ formed by iterating through each $a \in A$ and all the states, and removing those states with a priority higher than $P_a$ for some $a$. Now, if the original graph $\mathcal{G}$ has some ultimately periodic path $\pi$ with the top priorities being visited infinitely often given by $P$, then $\pi$ is also a path in $\mathcal{G}^\prime$. Thus, it suffices to form $\mathcal{G}^\prime$ and ask if there is some path $\pi$ such that $f_a \lex \pay_a(\pi)$. In what follows, we simply relabel $\mathcal{G}^\prime$ as $\mathcal{G}$, on the understanding that the above transformation has taken place.
	
	We now reduce the problem to finding certain cycles in $\mathcal{G}$. A \emph{cycle} is 
	a finite path $C$ of the form $s_0 s_1 \cdots s_n$ (for some $n \geq 1$) such 
	that $s_0 = s_n$ (note that cycles are not necessarily simple). 
	Write $s \in C$ to mean that $s = s_i$ for some $i \leq n$. 
	Define $\tot_a(C) = \Sigma_{j = 1}^n w_a(s_j)$, $\avg_a(C) = \frac{\tot_a(C)}{n}$, and $\max_a(C) = \max_{1 \leq j \leq n} p_a(s_j)$.  
	The stated problem is equivalent to deciding, given $W$ and $h:D \to \int$, 
	if there exists a cycle $C$ in $W$ such that i) $\max_a(C) = h(a)$ for every 
	$a \in D$, ii) $\avg_a(C) > 0$ for every $a \in A$. Note that we can replace 
	$\avg_a(C)$ by $\tot_a(C)$ in this  problem (since $\avg_a(C) > 0$ iff 
	$\tot_a(C) > 0$).
	In order to decide the existence of such a cycle, we adapt the proof 
	from~\cite{KS88} that shows how to decide if there is a cycle $C$ such that 
	for every $a \in A$, $\tot_a(C) = 0$.
	
	A \emph{multicycle} \MC  is a non-empty multiset of cycles.
	Thus a cycle is a multicycle \MC with $|\MC| = 1$. Extend $\tot_a$ and 
	$\max_a$ to multicycles as follows: $\tot_a(\MC) = \sum_{C \in \MC} 
	\tot_a(C)$ and $\max_a(\MC) = \max \{\max_a(C) : C \in \MC\}$.
	An \emph{$\upeta$-multicycle} is a multicycle \MC such that for all $a \in A$, we have i) $\max_a(\MC) = 
	P_a$, and ii) $\tot_a(\MC) > 0$. Additionally, an \emph{$\upeta$-cycle} is simply an $\upeta$-multicycle, $\MC$, with $\abs{\MC} = 1$ - that is, it consist of a single cycle. 
	
	Thus, deciding the problem started in the theorem is equivalent to deciding if there is an $\upeta$-cycle $\MC$. We now show that it is sufficient to decide if there is an $\upeta$-multicycle.
	
	Define a relation on $V$: $v \equiv w$ iff $v = w$ or there exists an 
	$\upeta$-multicycle $\MC$ and $C \in \MC$ such that $v,w \in C$.
	Note that $\equiv$ is an equivalence relation: indeed, if $u,v \in C$ for $C 
	\in \MC$, and $v,w \in C'$ for $C' \in \MC'$ then $u,w \in C''$ for $C'' \in 
	\MC''$ where $C''$ is formed by tracing $C$ from $v$ to $v$ and then tracing 
	$C'$ from $v$ to $v$, and $\MC''$ is $(\MC \cup \MC' \cup \{C^{\prime\prime}\}) \setminus \{C,C'\}$.
	Note that $\MC''$ is an $\upeta$-multicycle because $\tot_a(C'') = \tot_a(C) 
	+ \tot_a(C')$ and $\max_a(C'') = \max\{\max_a(C),\max_a(C')\}$.
	
	Suppose $\equiv$ has index $1$, {\em i.e.}, for all $v,w \in V$, $v \equiv w$. There are two cases - $\abs{V} = 1$ and $\abs{V} > 1$. First suppose that $\abs{V} = 1$ with $V = \{v\}$. Then either $v$ has a self-loop or it doesn't. If it does not, then there can be no $\upeta$-cycle. If it does, and the weight of the $v$ is strictly positive and its priority even, then it has an $\upeta$-cycle. Otherwise, it does not.
	
	Now suppose that $\abs{V} > 1$. We claim that there exists an $\upeta$-cycle.
	Indeed: for every $v, v' \in V$ let $\MC_{v,v'}$ be an $\upeta$-multicycle containing a cycle $C$ that visits $v$ and $v'$.
	Then $\MC = \cup_{v, v' \in V} \MC_{v,v'}$ is an $\upeta$-multicycle such that (*): for every $v, v' \in V$ there exists $C \in \MC$ such that $v,v' \in C$.
	We now define two transformations of multicycles $\MC \mapsto \MC'$ that maintain the following invariants: a) $\tot_a(\MC) = \tot_a(\MC')$ for $a \in A$, b) $\max_a(\MC) = \max_a(\MC')$ for $a \in D$, c) if $\MC$ satisfies (*) then so does $\MC'$, d) $|\MC'| < |\MC|$ ({\em i.e.}, the number of cycles decreases).
	Thus, repeatedly applying these transformation results in an $\upeta$-cycle. 
	
	First, if $C$ occurs more than once in $\MC$, say $n$ times, then remove all 
	occurrences of $C$ from $\MC$ and add the single cycle formed by tracing $C$ 
	$n$-many times. 
	Thus, we have that $\MC$ is a \emph{set} of cycles ({\em i.e.}, not a proper 
	multiset).
	Second, if \MC is not a single cycle, take $C \neq C' \in \MC$, $v \in C, v' 
	\in C'$ and by (*) pick $D \in \MC$ such that $v,v' \in D$.
	There are three cases: if $D \neq C, D \neq C'$ then form the cycle $F$ by 
	tracing $C$ from $v$ to $v$, then tracing ``half'' of $D$ from $v$ to $v'$, 
	then tracing $C'$ from $v'$ to $v'$, and then tracing the ``other half'' of 
	$D$ from $v'$ to $v$ and let $\MC'$ be $\MC \cup \{F\} \setminus \{C,C',D\}$; 
	if $D = C$ (the case $D = C'$ is symmetric), then $v' \in C$ and thus form 
	$F$ by tracing $C$ from $v'$ to $v'$ and then tracing $C'$ from $v'$ to $v'$, 
	and let $\MC'$ be $(\MC \cup \{F\}) \setminus \{C,C'\}$.
	Both transformations satisfy the invariants.
	
	Thus, the following algorithm decides if there is an $\upeta$-cycle (assuming one can decide if there exists an $\upeta$-multicycle): if $\abs{V} = 1$, check if the single node in $V$ has a self-loop, a strictly positive weight and an even priority.
	If it does, output ``yes'', otherwise, output ``no''.
	If $\abs{V} >1$, compute $\equiv$ for $V$; if it has index $1$ then output ``yes''; else, for each equivalence class $X \in V / \equiv$, recurse on the subgraph induced by $X$.
	The algorithm is clearly sound, {\em i.e.}, if it outputs ``yes'' then there is indeed an $\upeta$-cycle.
	To see that it is complete, note if that $C$ is an $\upeta$-cycle, then for all $v,w \in C$, $v \equiv w$; and thus $C$ is contained in an $\equiv$-class.
	
	Finally, we show how to decide if there exists an $\upeta$-multicycle using linear programming.
	We temporarily make the assumption that all the nodes of the graph lie in the same strongly connected component (SCC). For every edge $e$ introduce a variable $x_e$.
	Informally, the value $x_e$ is the number of times that the edge $e$ is used on an $\upeta$-multicycle.
	Formally, let $\src(e) = \{v \in V : \exists w\, e = (v,w) \in E\}$; $\trg(e) 
	= \{v \in V : \exists w\, e = (w,v) \in E\}$; $\OUT(v) = \{e \in E : \src(e) 
	= v\}$ and $\IN(v) = \{e \in E : \trg(e) = v\}$.
	
	The linear program LP has the following inequalities and equations:
	\begin{enumerate}
		\item[Eq1:]
		$x_e \geq 0$ for each edge $e$ --- this is a basic consistency criterion;
		
		\item[Eq2:]
		$\Sigma_{e \in E} x_e \geq 1$ --- this ensures that at least one edge is chosen;
		
		\item[Eq3:]
		for each $a \in A$, $\Sigma\{ w_a(\src(e)) \cdot x_e : e \in E\} > 0$ --- 
		this enforces that the total sum is positive;
		
		\item[Eq4:]
		for each $a \in D$, $\Sigma \{x_e : p_a(\src(e)) = P_a\} \geq 1$ --- this 
		ensures that the largest appearing priority for agent $a$ is $P_a$;
		
		\item[Eq5:]
		for each $v \in V$, $\Sigma_{e \in \OUT(v)} x_e = \Sigma_{e \in \IN(v)} 
		x_e$  --- this ``preservation'' condition says that the number of times one 
		enters a vertex is equal to the number of times one leaves that vertex.	
	\end{enumerate}

	We now prove that there exists a $\upeta$-multicycle if and only if the LP has an integer solution.
	Indeed, from left to right let $\MC$ a (non-empty) $\upeta$-multicycle and let $x_e$ be the number of occurrences of the edge $e$ in any of the cycles in $\MC$.
	Clearly, Eq1 and Eq2 are satisfied.
	Moreover, since $\MC$ a $\upeta$-multicycle, it holds that both $\sum_a(\MC) > 0$ and $\max_a(\MC) = P_a$ for each $a \in A$, which implies Eq3 and Eq4, respectively.
	Finally, observe that every vertex in a cycle is entered and left an equal number of times. Therefore, being $\MC$ a multicycle implies that Eq5 is satisfied and so the LP has an integer solution.
	
	From right to left, assume the LP has an integer solution $\{x_e\}_{e \in E}$. From Eq1, Eq2 we obtain that the solution provides that each edge $e$ is counted $x_e$ times. Moreover, by Euler's Theorem~\cite[Theorem 1.6.3]{JensenGutin}, Eq5 implies that the graph has an Eulerian cycle. Thus, the edges selected by the solution can be arranged in a multicycle $\MC$. Finally, Eq3 and Eq4 guarantee that $\sum_a(\MC) > 0$ and $\max_a(\MC) = P_a$ for each $a \in A$, thus implying that $\MC$ is a $\upeta$-multicycle.
	
	Now, from~\cite{BG07}, we obtain that the program LP has a solution in the reals iff it has a solution in the rationals~\cite{BG07}.
	Moreover, if $(x_e)_{e \in E}$ is a solution to LP and $k \in \mathbb{N} 
	\setminus \{0\}$, then $(kx_e)_{e \in E}$ is also a solution to LP.
	Thus, the LP has a solution iff it has an integer solution. Thus, the LP has a solution iff the graph has an $\upeta$-multicycle.

	With this in place, we can now relax the assumption of the graph consisting of a strongly connected component. First, we can use Tarjan's algorithm \cite{DBLP:journals/siamcomp/Tarjan72} to obtain the strongly connected components (SCCs) of the graph in linear time. Then for every non-trivial SCC that is reachable from the start node, we can use the above procedure to determine if it contains an $\eta$-cycle $\MC$. As payoffs are prefix-independent with respect to paths, this implies that $\MC$ is an $\eta$-cycle for the whole graph. 

	Now, in terms of complexity, we need to non-deterministically guess a vector $P \in \mathbb{Z}^A$. Then we recursively compute the equivalence relation $\equiv$, which can be done with an application of Tarjan's algorithm, in time $O\left(\abs{V} + \abs{E}\right)$, following by the application of $|E|$ linear programs (Cfr.~\cite{KS88}). Finally, observe that the size of the linear program is polynomial in the size of the graph. By an identical argument to \cite{KS88}, we can conclude that the problem can be determined in non-deterministic polynomial time. \qed
\end{proof}

\subsection{Putting the Steps Together} \label{sec:together}

We can now finish the proof of Theorem~\ref{thm:NE}. Consider a rational $\epsilon \geq 0$ and a $\LEX(\LTL,\mp)$ game $G$. Throughout, let $\abs{\gamma}$ and $W$ denote the size of the largest goal and weight respectively.

First, by Proposition~\ref{prop:LTLtoParity} we can transform $G$ into a $\LEX(\parity, \mp)$-game, $G^\prime$. In this new game, the number of states, $n^\prime$ is at most doubly-exponential in the size of the goals, $\abs{\gamma}$ and the number of priorities, $\abs{\rho}$ is at most singly-exponential in the size of the goals. 

Second, for every agent $a \in \Ag$ and a state $s \in \St$, we compute the punishing value $p_a(s)$ by means of Proposition~\ref{prop:value}. Constructing the corresponding mean-payoff parity games needed to do this can be done in time linear of the size of the game. Moreover, we can determine the value of these games in time $O(n^d \cdot (m + \textsf{MP} + \textsf{Parity}))$, where $n$ is the number of vertices in the game graph, $m$ the number of edges, $d$ is the number of priorities, and $\textsf{MP}$ and $\textsf{Parity}$ denote the complexity of solving mean-payoff and parity games respectively. Solving mean-payoff games can be done in time $O(\abs{V}^3\cdot \abs{E}\cdot W)$, where $V$ is the set of vertices of the game, $E$ is the set of edges, and $W$ is the largest weight \cite{ZP95}. Moreover, parity games can be solved by reducing them to mean-payoff games \cite{Jurd98}. The reduction implies that these games can be solved in time $O(\abs{V}^3\cdot \abs{E}\cdot \abs{V}^{\abs{\rho}})$, where $V$ and $E$ are the vertices and edges of the game graph, and $\abs{\rho}$ is the size of the largest priority. Putting all this together, we see that calculating the punishment values of all players in all states can be done in time,
\begin{align*}
	&\phantom{=}\abs{\Ag}\cdot n^\prime \cdot O\left({n^\prime}^{\abs{\rho}} \cdot \left({n^\prime}^2 + {n^\prime}^3\cdot{n^\prime}^2\cdot W + {n^\prime}^3\cdot{n^\prime}^2\cdot {n^\prime}^{\abs{\rho}} \right)\right) \\
	&= O\left(\abs{\Ag}\cdot W \cdot 2^{2^{q(\abs{\gamma})}}\cdot \right),
\end{align*}

where $q$ is some appropriately chosen polynomial. Thus, the above step can be done in time doubly exponential in the size of the goals of the game and linear in the number of agents and the size of the largest weight. 

Third, thanks to the characterization of finite-state strict $\epsilon$-Nash Equilibria provided in Proposition~\ref{prop:NEchar}, there is a $\vec{\upsigma} \in \SEPNE(G)$ if, and only if, there exists a tuple $\tup{z} \in \mathbb{Z}^{\Ag}$ of values with $z_a \in \{p_a(s) : s \in \St\}$, and a path $\pi$ in $G$ such that $z_a \lex \pay_a(\pi) + \epsilon$ for all $a \in \Ag$, and for all $i \in \SetN$, $\pi_{i} = (s_i, \delta_i)$ is $z_a$-secure for $a$.
Now, let $G[\tup{z}]$ denote the multi-weighted graph $(\St,E,(\upkappa_a)_{a \in A}, (\uprho_a)_{a \in A})$ such that $(s,s') \in E$ iff there exists $\delta$ such that $\tr(s,\dec) = s'$ and $(s,\dec)$ is $\tup{z}$-secure for all $a \in \Ag$.
Observe that $(s_i, \delta_i)$ being $z_a$-secure for $a$, for every $a \in \Ag$ and $i \in \SetN$, is equivalent to the fact 
that $\pi$ is contained in $G[\tup{z}]$.
Therefore, we reduced the problem of deciding whether $\SEPNE(G) \neq \emptyset$ to deciding whether there exists an ultimately periodic path $\pi \in G[\tup{z}]$ such that $z_a \lex \pay_a(\pi) + \epsilon$, for every $a \in \Ag$. We solve this problem by guessing the top priorities visited by each player, and then employing the linear programming approach described in Theorem~\ref{thm:pathfind} on the multi-weighted graph $G[\tup{z}]$ and vector of payoffs $\tup{z} + \epsilon =(z_{a_1} + \epsilon, \ldots, z_{a_n} + \epsilon)$.
In terms of complexity, we need to iterate through all possible vectors of punishment values, of which there are ${n^\prime}^{\abs{\Ag}}$, iterate through all possible priorities for each player, of which there are $\abs{\rho}^{\abs{\Ag}}$, then compute the equivalence relation recursively by solving the corresponding linear programs.
Letting $q^\prime$ be some appropriate polynomial, capturing the complexity of computing the equivalence relation, we see that this can all be done in time $O\left({n^\prime}^{\abs{\Ag}} \cdot \abs{\rho}^{\abs{\Ag}} \cdot q^\prime(n^\prime)\right)$.
Expanding this out, and letting $q^{\prime\prime}$ be appropriately chosen polynomials, we can conclude that our path finding algorithm can be done in time, 
\begin{equation*}
	O\left(2^{\abs{\Ag} \cdot 2^{q^{\prime\prime}\left(\abs{\gamma}\right)} }\right)
\end{equation*}

Thus, in total, our algorithm to determine if a given game has a finite state strict $\epsilon$ Nash equilibrium can be done in time linear in the number of weights, exponential in the number of agents and doubly exponential in the size of the goals. As such, our algorithm lies in 2EXPTIME.

With this, we can now prove the main result of this work:
\begin{theorem}
	There is a 2\exptime algorithm that, given a rational $\epsilon \geq 0$, a $\LEX(\LTL,\mp)$ game $G = \tpl{A, (\upkappa_{a})_{a \in \Ag}, \AP, \lab, (\upgamma_{a})_{a \in \Ag}}$ on arena $A = \tpl{\Ag,  \Act, \St, \iota, \tr}$ and an \LTL-formula $\Phi$, decides whether there is $\vec{\upsigma} \in \SEPNE(G)$ such that $\pi_{\vec{\upsigma}} \models \Phi$.
\end{theorem}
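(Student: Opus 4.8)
The plan is to reuse the entire machinery built for \SEPNE-emptiness (Theorem~\ref{thm:NE}) and to fold the extra requirement $\pi_{\vec{\upsigma}} \models \Phi$ into a single ``monitoring'' component that is read off during the path-finding step. The key observation is that the equilibrium characterisation of Proposition~\ref{prop:NEchar} already describes exactly \emph{which} ultimately periodic paths arise as plays of a finite-state strict $\epsilon$ Nash equilibrium; and since $\Phi$ is not the goal of any agent, it enters neither any player's payoff nor any punishing value $p_a(s)$. Hence asking that the play additionally satisfy $\Phi$ simply amounts to intersecting the set of equilibrium plays with the paths satisfying $\Phi$. Concretely, Proposition~\ref{prop:NEchar} gives: there is a $\vec{\upsigma} \in \SEPNE(G)$ with $\pi_{\vec{\upsigma}} \models \Phi$ if and only if there is a threshold vector $\tup{z} \in \Omega^{\Ag}$ with $z_a \in \{p_a(s) : s \in \St\}$ and an ultimately periodic path $\pi$ in $G[\tup{z}]$ such that $z_a \lex \pay_a(\pi) + \epsilon$ for every $a$ \emph{and} $\pi \models \Phi$, so that the characterisation applies verbatim with one added path constraint.

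First I would compile $\Phi$ into a DPW $\Automaton_\Phi = \tpl{2^{\AP}, Q_\Phi, q_\Phi^0, \Delta_\Phi, \rho_\Phi}$ via Theorem~\ref{thm:LTL to DPW}, and then take the product of the $\LEX(\parity,\mp)$ game $G'$ (obtained from $G$ as in Section~\ref{sec:LTLtoParity}) with $\Automaton_\Phi$, in exactly the same way that $A'$ is built there. Because this product only refines states deterministically, and because strategies are functions of decision histories rather than of states, the argument of Proposition~\ref{prp:payinvariance} and Proposition~\ref{prop:LTLtoParity} carries over unchanged: the set of strict $\epsilon$ Nash equilibria is preserved, while the $\Automaton_\Phi$-coordinate of the unique product run satisfies the parity condition precisely when the underlying play satisfies $\Phi$. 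The punishing values $p_a(s)$ are computed exactly as before through Proposition~\ref{prop:value}, as they depend only on each agent's own priority and weight data and are oblivious to the $\Automaton_\Phi$-coordinate.

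The heart of the argument is then the path-finding step. I would invoke Theorem~\ref{thm:pathfind} on the multi-weighted graph $G[\tup{z}]$ taken in product with $\Automaton_\Phi$, now over the index set $\Ag \cup \{\star\}$: for the original indices $a \in \Ag$ the weight and priority functions and the (suitably $\epsilon$-shifted) target payoffs $z_a$ are taken exactly as in the proof of Theorem~\ref{thm:NE} in Section~\ref{sec:together}, while the fresh index $\star$ carries the priority function $\rho_\Phi$ read off the $\Automaton_\Phi$-coordinate together with the trivial weight function $w_\star \equiv 0$. Choosing the target payoff $f_\star = (\bot, 0)$ then forces precisely the constraint we want: since $w_\star \equiv 0$ gives $\mp_\star(\pi) = 0$, the definition of $\lex$ yields $f_\star \lex \pay_\star(\pi)$ if and only if $\parity_\star(\pi) = \top$, that is, if and only if $\pi \models \Phi$. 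Thus the augmented path-finding instance admits a solution exactly when some finite-state strict $\epsilon$ Nash equilibrium play of $G$ satisfies $\Phi$.

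Finally, for the complexity I would check that every added ingredient is absorbed into the doubly-exponential bound of Theorem~\ref{thm:NE}: the automaton $\Automaton_\Phi$ has at most doubly-exponentially many states and singly-exponentially many priorities in $\size{\Phi}$, so the product game stays doubly exponential in the size of the goals and of $\Phi$, with singly-exponentially many priorities; computing the punishing values remains within the same bound; and the path-finding of Theorem~\ref{thm:pathfind} runs in nondeterministic polynomial time in this (doubly-exponential) graph, with only one additional top-priority (for index $\star$) to guess. The running time is therefore again of the form $O\!\left(2^{\abs{\Ag}\cdot 2^{q(\abs{\gamma} + \size{\Phi})}}\right)$ for a suitable polynomial $q$, placing the procedure in $2$\exptime. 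The step I expect to be the main obstacle is justifying that attaching the $\Phi$-monitor leaves the equilibrium structure genuinely untouched: one must verify that the product with the finite deterministic $\Automaton_\Phi$ preserves strict $\epsilon$ Nash equilibria and their plays (so Proposition~\ref{prop:NEchar} applies on the product without modification), that ultimate periodicity is preserved in both directions under this product, and that encoding ``$\pi \models \Phi$'' purely qualitatively through $f_\star = (\bot,0)$ with zero weights is faithful rather than interfering with the quantitative thresholds of the genuine agents.
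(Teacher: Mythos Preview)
Your proposal is correct, but it takes a genuinely different route from the paper's proof. The paper does \emph{not} instrument the path-finding step; instead it gives a pure reduction from $\SEPNE$-existence to $\SEPNE$-emptiness. Concretely, the paper adds two fresh agents $a_1,a_2$ with constant-zero weights and \LTL goals $\gamma_{a_1}' = \Phi \vee \nextX p$ and $\gamma_{a_2}' = \Phi \vee \nextX \neg p$, where a new atom $p$ records whether $a_1$ and $a_2$ just chose different actions. On any play violating $\Phi$ these goals become the two sides of a matching-pennies game, so at least one of $a_1,a_2$ can profitably deviate and no equilibrium exists; on any play satisfying $\Phi$ both are already at their maximum payoff. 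Hence $\SEPNE(G') \neq \emptyset$ iff some $\vec{\upsigma} \in \SEPNE(G)$ has $\pi_{\vec{\upsigma}} \models \Phi$, and one then simply invokes Theorem~\ref{thm:NE} on $G'$.

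Your approach instead opens up the machinery of Section~\ref{sec:together}: compile $\Phi$ into a DPW, take a synchronous product with the $\LEX(\parity,\mp)$ game $G'$, observe (correctly) that the product is payoff-invariant for the real agents so Propositions~\ref{prp:payinvariance}--\ref{prop:NEchar} transfer and punishing values are unchanged, and finally feed the $\Phi$-monitor into Theorem~\ref{thm:pathfind} as an extra index $\star$ with zero weights and target $f_\star = (\bot,0)$. This is sound; the verifications you flag (equilibrium preservation under the deterministic product, ultimate-periodicity preservation, and the purely qualitative reading of $f_\star$) all go through by the same arguments used for the agents' own automata in Section~\ref{sec:LTLtoParity}. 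The trade-off is that the paper's gadget is modular --- it treats Theorem~\ref{thm:NE} as a black box, so any future emptiness procedure would plug in unchanged --- whereas your argument is more direct and makes transparent that $\Phi$ interacts with no agent's payoff, at the cost of re-tracing the internals of the emptiness algorithm.
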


\begin{proof}
	
	We show that the problem can be reduced to deciding whether there exists a Nash Equilibrium in a game $G'$ defined over the arena $A' = \tpl{\Ag \cup \{a_1, a_2\},  \Act, \St \times \{0, 1\} , (\iota, 0), \tr'}$ in which $a_{1}, a_{2}$ are two new fresh agents and the transition function $\tr'$ is defined as follows.
	
	$$
	\tr'((s, \varpi), \delta) = \begin{cases}
	(\tr(s, \delta_{\rst \Ag}), 0) & \text{if } \delta(a_{1}) = \delta(a_2) \\
	(\tr(s, \delta_{\rst \Ag}), 1) & \text{otherwise}
	\end{cases}.
	$$
	
	Then, consider a fresh atomic proposition $p$ and define
	$$G' = \tpl{A', (\upkappa_{a}')_{a \in \Ag'}, \AP \cup \{p\}, \lab', (\upgamma_{a}')_{a \in \Ag'}}$$
	such that $\upkappa_{a}' = \upkappa_{a}$ for every $a \in \Ag$ and
	$\upkappa_{a_1}'(c) = \upkappa_{a_2}'(c) = 0$ for every action $c \in \Act$, and $\lab'(s,0) = \lab(s)$ and $\lab'(s,1) = \lab(s) \cup \{p\}$, for every state $s \in \St$.
	Finally, define $\gamma_{a}' = \gamma_{a}$, for every $a \in \Ag$ and $\gamma_{a_{1}}' = \Phi \vee \nextX p$ and $\gamma_{a_{2}}' = \Phi \vee \nextX \neg p$.
	
	Intuitively, the game $G'$ results from pairing $G$ with a two-player game played by agents $a_{1}$ and $a_{2}$ that are triggered to play against each other in case the formula $\Phi$ is not satisfied along the path.
	
	Now, on one hand, let $\vec{\upsigma} \in \SEPNE(G)$ such that $\pi_{\vec{\upsigma}} \models \Phi$ and consider a strategy profile $\vec{\upsigma}'$ in $G'$ such that $\vec{\upsigma}'_{\rst \Ag} = \vec{\upsigma}$~\footnote{This is an abuse of notation, as the transition functions of the two strategies are defined on a different set of decisions.
	Here we mean that the transition functions of $\vec{\upsigma}'$ simply copy the ones in $\vec{\upsigma}$ by ignoring the components expressed by agents $a_{1}$ and $a_{2}$}.
	Clearly, each agent $a \in \Ag$ takes the same sequence of actions in both strategy profiles, implying that $\mp_{a}(\pi_{\vec{\upsigma}}) = \mp_{a}(\pi_{\vec{\upsigma}'})$.
	Moreover, it is easy to see that $\lab'(\pi_{\vec{\upsigma}'})_{\rst \AP} = \lab(\pi_{\vec{\upsigma}})$~\footnote{With another abuse of notation, we here mean the restriction of sequences in $\AP \cup \{p\}$ to sequences in $\AP$.}, and so that $\pi_{\vec{\upsigma}'} \models \gamma_{a}$ if and only if $\pi_{\vec{\upsigma}} \models \gamma_{a}$ for every $a \in \Ag$.
	This means that $\sat_{a}(\pi_{\vec{\upsigma}'}) = \sat_{a}(\pi_{\vec{\upsigma}})$ and so that $\pay_{a}(\pi_{\vec{\upsigma}'}) = \pay_{a}(\pi_{\vec{\upsigma}})$.
	Moreover, since $\pi_{\vec{\upsigma}} \models \Phi$, it holds that $\pi_{\vec{\upsigma}'} \models \gamma_{a_{1}}$ and $\pi_{\vec{\upsigma}'} \models \gamma_{a_{2}}$ and that $\pay_{a_{1}}(\pi_{\vec{\upsigma}'}) = \pay_{a_{2}}(\pi_{\vec{\upsigma}'}) = (\top, 0)$.
	
	Now, assume by contradiction that $\vec{\upsigma}' \notin \SEPNE(G')$.
	First observe that, $(\top, 0)$ is the maximum payoff that players $a_{1}$ and $a_{2}$ can achieve in $G'$ and so neither of them has an incentive to deviate from $\vec{\upsigma}'$.
	Then, assume there is an agent $a \in \Ag$ and a strategy $\sigma_{a}''$ such that $\pay_{a}(\pi_{\vec{\upsigma}'}) + \epsilon \lexeq \pay_{a}(\pi_{(\vec{\upsigma}_{-a}^\prime, \vec{\upsigma}''_{a})})$.
	Then, we would also have that $\pay_{a}(\pi_{\vec{\upsigma}}) + \epsilon \lexeq \pay_{a}(\pi_{(\vec{\upsigma}_{-a}^\prime, \vec{\upsigma}''_{a})})$, in contradiction with the fact that $\vec{\upsigma} \in \SEPNE(G)$.
	
	On the other hand, assume that $\vec{\upsigma}' \in \SEPNE(G')$.
	Then, by a symmetrical reasoning, we obtain that $\vec{\upsigma} = \vec{\upsigma}'_{\rst \Ag} \in \SEPNE(G)$.
	Moreover, note that $\pi_{\vec{\upsigma}'} \models \Phi$, otherwise, either $\pay_{a_{1}}(\pi_{\vec{\upsigma}'}) = (\bot, 0)$ or $\pay_{a_{2}}(\pi_{\vec{\upsigma}'}) = (\bot, 0)$, and there would exist a beneficial deviation for one of them, contradicting the fact that $\vec{\upsigma}' \in \SEPNE(G')$.
	Hence, from $\pi_{\vec{\upsigma}'} \models \Phi$ we obtain that $\pi_{\vec{\upsigma}} \models \Phi$, which concludes the proof.	\qed
\end{proof}

\begin{theorem}
	Deciding whether there exists a finite-state strict $\epsilon$ Nash Equilibrium in a given \LEX(\LTL,\mp) game is 2\exptimeH.
\end{theorem}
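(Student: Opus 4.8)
The plan is to prove $2$\exptime-hardness by reducing from the known $2$\exptime-hard problem of deciding \LTL realizability (equivalently, solving a two-player zero-sum game where one player has an \LTL objective). This is the canonical source of $2$\exptime-hardness in this area, and the paper already mentions that the result subsumes ordinary Nash equilibrium in two-player zero-sum \LTL games, which is $2$\exptime-hard. So the goal is to encode such a two-player zero-sum \LTL game as a \LEX(\LTL,\mp) game and show that a strict $\epsilon$ Nash equilibrium exists if and only if the original \LTL formula is realizable.

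First I would take an instance of the \LTL game: an arena with two players, where player $1$ (the ``system'') wants to satisfy an \LTL formula $\varphi$ against all behaviours of player $2$ (the ``environment''). The decision problem is whether player $1$ has a winning strategy. To turn this into an equilibrium-existence question, I would build a \LEX(\LTL,\mp) game on essentially the same arena, set all weight functions $\upkappa_a$ to be constant (say $0$) so the mean-payoff component is inert and the payoff reduces to the qualitative part, and assign goals so that equilibrium existence forces realizability. The natural assignment is to give player $1$ the goal $\varphi$ and to give player $2$ a strictly \emph{opposed} goal, so that the game is genuinely zero-sum on the qualitative level. Because the lexicographic payoff collapses to just the \LTL satisfaction bit when all weights vanish, a strategy profile is a strict $\epsilon$ Nash equilibrium precisely when neither player can switch to satisfy its own goal; with opposed goals this pins down who wins.

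The key steps, in order, would be: (i) describe the reduction (arena, constant weights, goals) and argue it is computable in polynomial time, so hardness transfers; (ii) show that if $\varphi$ is realizable then there is a strategy profile where player $1$ satisfies $\varphi$, player $2$ cannot improve (since in a zero-sum setting player $2$ is already at its worst outcome and any deviation cannot help), and player $1$ clearly cannot improve since it already achieves its goal --- hence a strict $\epsilon$ Nash equilibrium exists; and (iii) conversely, show that the existence of a strict $\epsilon$ Nash equilibrium forces player $1$ to be satisfying $\varphi$ along the equilibrium play, and moreover that player $1$'s strategy must be winning against \emph{all} deviations of player $2$ (here the crucial feature is that in the definition of strict $\epsilon$ Nash equilibrium the deviating strategies need not be finite-state, so player $2$ ranges over all possible responses). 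This last point is exactly what recovers ``winning against every environment strategy'' from ``no profitable deviation''.

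The main obstacle I expect is handling the strictness and the $\epsilon$ correctly in the two-player setting, and in particular making the goals cooperate with the equilibrium condition rather than fight it. A subtlety is that strict equilibrium requires every deviation to be \emph{strictly} worse, so I must ensure the encoding does not accidentally admit ``tying'' deviations that would break strictness; choosing player $2$'s goal to be the negation-style dual of $\varphi$ (so the two agents can never be simultaneously satisfied, and exactly one is satisfied on every play) makes the qualitative payoffs genuinely opposed and avoids ties. Since all weights are $0$, the $\epsilon$ slack only ever affects the (identically zero) quantitative component, so it never lets a losing player ``catch up'' on the qualitative bit; thus the argument is uniform in $\epsilon \geq 0$. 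Once the opposed-goals encoding is set up so that exactly one agent wins on each play, the equivalence between equilibrium existence and \LTL realizability follows, and the $2$\exptime lower bound for the realizability problem transfers directly, establishing $2$\exptimeH.
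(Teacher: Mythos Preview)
Your reduction has a genuine gap in step (iii). With opposed goals $\varphi$ and $\neg\varphi$ and all weights zero, the resulting two-player game is determined (the objectives are $\omega$-regular), so exactly one player has a winning strategy. Whichever player that is, the profile in which they play their winning strategy and the opponent plays anything is a strict $\epsilon$ Nash equilibrium for every $\epsilon>0$: the winner already has payoff $(\top,0)$ and any deviation yields at most $(\top,0)\lex(\top,\epsilon)$; the loser has $(\bot,0)$ and, since the opponent's strategy is winning against \emph{all} responses, any deviation yields $(\bot,0)\lex(\bot,\epsilon)$. Hence $\SEPNE$ is \emph{always} nonempty in your construction, regardless of whether $\varphi$ is realizable, and the reduction collapses. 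Your claim that ``the existence of a strict $\epsilon$ Nash equilibrium forces player $1$ to be satisfying $\varphi$'' is simply false when player $2$ is the one with the winning strategy. (For $\epsilon=0$ the construction fails the other way: no profile is strict, since the losing player always has payoff-preserving deviations.)

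The paper avoids this by not reducing directly from two-player zero-sum \LTL\ realizability. Instead it reduces from the (already known to be 2\textsc{ExpTime}-complete) problem of deciding existence of an ordinary finite-state Nash equilibrium in a multi-player \LTL\ game~\cite{GHW15}. The reduction is essentially the identity: keep the arena and the \LTL\ goals, and set every weight to $0$. With all mean-payoffs identically $0$, one shows $\FNE(G)=\SEPNE(G')$ for every $\epsilon>0$: a profitable qualitative deviation in $G$ gives $(\bot,\epsilon)\lexeq(\top,0)$ in $G'$, violating strict $\epsilon$-NE; conversely any violation of strict $\epsilon$-NE in $G'$ must flip the qualitative bit (the quantitative component cannot move), giving a profitable deviation in $G$. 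If you want to start from realizability, you would need to embed the asymmetry that the \cite{GHW15} hardness proof already builds in, rather than the symmetric opposed-goals encoding.
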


\begin{proof}
	We show a reduction from the problem of finding 
	finite-state Nash Equilibria in (simple) \LTL games, whose complexity is 2\exptimeC (Cfr., see~\cite{GHW15}).
	For an $\LTL$ game $G = (A, \AP, \lab, (\gamma_{a})_{a \in \Ag})$ on the arena $A = \tpl{\Ag,  \Act, \St, \iota, \tr}$, consider the \LEX(\LTL,\mp) game $G' = (A, (\kappa_{a})_{a \in \Ag}, \AP, \lab, (\gamma_{a})_{a \in \Ag})$ over the same arena and such that $\kappa_{a}(s) = 0$ for every $s \in \St$.
	Intuitively, $G'$ is the same as $G$ but with a vacuous weighting added to match the game type of \LEX(\LTL,\mp).
	In particular, note that every strategy $\sigma_{a}$ in $G$ for player $a$ is also a strategy in $G'$ for the same player, and vice-versa.
	At this point, for a non negative $\epsilon > 0$, and denoting the set of finite-state Nash Equilibria of an LTL game, $G$, by $\FNE(G)$, we claim that $\FNE(G) = \SEPNE(G')$. The proof is by double inclusion.
	
	On one hand, let $\vec{\upsigma} \in \FNE(G)$ and assume, by contradiction that $\vec{\upsigma} \notin \SEPNE(G')$.
	Then, there exists an agent $a \in \Ag$ and a strategy $\sigma_{a}'$ such that $\pay_{a}(\pi_{\vec{\upsigma}}) + \epsilon \lexeq \pay_{a}(\pi_{(\vec{\upsigma}_{-a}, \sigma_{a}')})$.
	Now, observe that, since the $\mp$ value in $G'$ for every player is always null, the above inequality can only apply when $\pay_{a}(\pi_{\vec{\upsigma}}) = (\bot, 0)$ and $\pay_{a}(\pi_{(\vec{\upsigma}_{-a}, \sigma_{a}')}) = (\top, 0)$.
	Hence, we obtain that $\pi_{\vec{\upsigma}} \not\models \gamma_{a}$ and $\pi_{(\vec{\upsigma}_{-a}, \sigma_{a}')} \models \gamma_{a}$ which contradicts the fact that $\vec{\upsigma} \in \FNE(G)$.
	
	On the other hand, let $\vec{\upsigma} \in \SEPNE(G')$ and assume, by contradiction, that $\vec{\upsigma} \notin \FNE(G)$.
	Then, there exists an agent $a \in \Ag$ and a strategy $\sigma_{a}'$ such that $\pi_{\vec{\upsigma}} \not\models \gamma_{a}$ and $\pi_{(\vec{\upsigma}_{-a}, \sigma_{a}')} \models \gamma_{a}$.
	Hence, in $G^\prime$,
	 we have that $\pay_{a}(\pi_{\vec{\upsigma}}) + \epsilon = (\bot, \epsilon) \lexeq (\top, 0) = \pay_{a}(\pi_{(\vec{\upsigma}_{-a}, \sigma_{a}')})$, in contradiction with the fact that $\vec{\upsigma} \in \SEPNE(G')$.	\qed
\end{proof}

\section{Related Work}
\label{secn:relwork}

Our work has its origins in several threads of research in AI and mainstream computer science.
In computer science, the problem of checking whether a (typically finite state) system satisfies a specification expressed as a temporal logic formula has been a major research area since the 1970s~\cite{pnueli:77a}.
In the 1980s, the introduction of the model checking paradigm meant that verification based on temporal logic became a practical possibility, leading to a rapid growth of interest in model checking both in the verification community and beyond~\cite{clarke:2000a}.
In the late 1990s, attention began to move from the verification of \emph{closed} systems to \emph{open} systems.
A key problem in the verification of open systems is that of determining whether a particular system or system component can \emph{force} a property to hold: that is, whether there exists a strategy such that, by following that strategy, the system component can ensure that the property will be \emph{guaranteed} to hold, irrespective of the behaviour of other system components. A logic called Alternating-time Temporal Logic (ATL) was introduced to explicitly support such reasoning about strategic
ability~\cite{AHK02}.
ATL proved to be extremely influential, and was widely taken up within the multi-agent systems community.  
Although ATL embodies an important game theoretic concept---the
notion of a winning strategy---it provides no mechanism for
capturing preferences, and so its ability to capture game theoretic
concepts beyond strategic ability---such as Nash equilibrium---is
inherently limited.  For this reason, research then began to shift to
formalisms that could capture properties such as Nash equilibrium. Of
these, Strategy Logic is currently the best-known example of such a
formalism~\cite{chatterjee:2010a}.  

In the context of concurrent games and multi-agent systems, the main
decision problem in this work concerns the existence of an equilibrium
satisfying a system property; this problem is called ``rational
synthesis''~\cite{FKL10} or ``rational verification'' (cf., E-NASH
in~\cite{WGHMPT16,gutierrez:2017a}) and includes equilibrium-emptiness
as a special case. In this article, we studied this problem,
specifically, for $\LEX(\LTL,\mp)$-games and finite-state strict
$\epsilon$ Nash equilibria (via a reduction 
$\LEX(\parity,\mp)$-games), which includes, as a special case, strict Nash equilibria as well as Nash equilibria in games with \LTL goals.
All such problems can be solved in 2\textsc{ExpTime} regardless of the
particular setting.  Most other work in rational synthesis concerns ordinary (i.e., not necessarily finite-state, nor strict) NE-emptiness. In
particular, NE-emptiness has been studied for other objectives,
notably mean-payoff (\np-complete)~\cite{UW11}, B\"uchi
(\ptime-complete)~\cite{BBMU15}, and lexicographic order on B\"uchi
objectives (in \np, but not known to be \np-complete)~\cite{BBMU15}.

E-NASH (similar to what we call equilibrium-existence) for finite-state strategies has been studied on iterated
Boolean Games (a simple form of infinite-duration multiplayer
concurrent games) as follows: with LTL objectives, E-NASH is
2\textsc{ExpTime}-complete~\cite{GHW15}; with objective-LTL, {\em
  i.e.}, each agent has to optimise a reward based on the truth value
of a finite number of fixed LTL formulae, E-NASH is
2\textsc{ExpTime}-complete~\cite{KPV16}. A special case of
objective-LTL is the lexicographic order on a finite number of
components, each consisting of an \LTL formula, also
2\textsc{ExpTime}-complete.

Actually, these lower-bounds are inherited from the fact that solving two-player zero-sum games with \LTL objectives is already 2\textsc{ExpTime}-complete~\cite{RosnerThesis}.

We remark that all these works (except objective-LTL and LTL[F], discussed below) concern equilibria concepts in multiplayer games with either qualitative or quantitative objectives, but not a combination, as we do.

Most of the work that combines more than one objective has mainly been studied for the restricted setting of two-player games. There work that considers a combination of quantitative objectives such as multi-mean-payoff and multi-energy games~\cite{VCHRR15}. In the non-zero-sum case, secure-equilibria (in which each player tries to maximise their own payoff and then minimise their opponent's payoff) has been studied for a host of quantitative objectives, including mean-payoff~\cite{BruyereMR14}.

In terms of work that has studied combinations of qualitative and quantitative objectives, as we do, there are a number of results. Again, such combinations have mainly been studied in the setting of two-player games. In the zero-sum case notable works combine the parity condition with mean-payoff objectives~\cite{CHJ05,BCHJ09} (we draw on results of \cite{CHJ05} 
in our proofs) or with energy objectives~\cite{ChatterjeeD12}. On the other hand, there has been some work in the multi-agent setting, but in different settings. ~\cite{AKP18} studies the rational synthesis problem for an extension of \LTL by quality operators. Contrary to our work, such a logical extension does not account qualitative and quantitative combination of objectives, but rather aims at measuring the degree of satisfaction of a given qualitative objective. In addition, it is not expressive enough to capture mean-payoff objectives. Objective-LTL combines Boolean objectives (given as \LTL formulae) in a weak way, {\em i.e.}, there are only finitely many possible payoffs.
In contrast, $\LEX(\LTL,\mp)$ combines qualitative objectives (given as \LTL formulae) with quantitative objectives (given as mean-payoff objectives), and thus result in infinitely many possible payoffs.

Our work is somewhat related to the paradigm of \emph{Boolean games}~\cite{harrenstein:2001a,bonzon:2006a,wooldridge:2013a,grant:2011b,dunne:2008a}.
A Boolean game is a non-cooperative game played over a set of Boolean
variables. Each player in such a game desires the satisfaction of a
goal, specified as a logical formula over the overall set of
variables, and is assumed to control a subset of the variables: the
choices available to a player correspond to the assignments that can
be made to the variables controlled by that player. Players
simultaneously choose valuations for the variables they control, and a
player is satisfied if their goal is made true by the resulting
overall valuation. In addition to being an interesting game-theoretic
model in their own right, Boolean games are a natural abstract model
for studying strategic behaviour in multi-agent systems. Specifically
related to our setting is the generalisation of Boolean games known as
\emph{Iterated Boolean Games}, in which goals are specified as \LTL\
formulae, and the game takes place over an infinite number of
rounds~\cite{GHW15}.  As with our approach, most work in Boolean games
assumes pure strategies, although some attention has recently
been given to mixed strategies~\cite{ianovski:2018a}. 

Finally, it is also worth mentioning work on multi-agent planning
models such as the multi-agent STRIPS
model~\cite{brafman:2009a,brafman:2013a}. These frameworks, model systems where each agent
is an individual planning system, attempting to achieve an individual
goal. The relationship between these frameworks and concurrent games
was investigated by Gutierrez \emph{et al.}~\cite{gutierrez:2017a}.

\section{Conclusion}
\label{secn:conc}
In the last twenty years, significant effort was devoted to
analyze qualitative aspects of multi-agent systems, and more recently, also quantitative aspects. However, the two settings are often 
investigated independently.  As this is not appropriate in many
natural scenarios, researchers began to look at the combination
of these two worlds. The achievements to date in this direction,
however, are far from satisfactory, either because the settings are
too weak, ({\em e.g.}, they cannot model important solution concepts
such as Nash Equilibrium~\cite{BG13}), or because they are too
expensive in terms of complexity, ({\em e.g.}, between
\textsc{ExpTime}-hard and undecidable~\cite{ALNR15}).

In this paper we introduce a model of multi-agent systems in which
each agent's payoff is a lexicographic combination of qualitative
(\LTL) and quantitative (mean-payoff) payoffs.  We call these
$\LEX(\LTL,\mp)$ games. The solution concept we focus on is
finite-state strict $\epsilon$ Nash equilibria (for
$\epsilon \geq 0$). In this setting, we proved that the
rational synthesis problem (a generalisation of the equilibrium
existence problem) is decidable, and moreover is in \np when the 
qualitative goals are represented as parity conditions and is 
2\textsc{ExpTime}-complete when they are given by \LTL formulae.  
The proof characterises the equilibrium executions as certain 
ultimately periodic paths in a multi-weighted graph. To compute 
this graph we solve two-player zero-sum games with lexicographic 
objectives, and to find paths in such graphs we use Linear Programming.
The question
as to whether the main result also holds for ordinary equilibria ({\em i.e.}, non-strict) in games with \LTL goals has been left open. 
Indeed, our choice of using strict equilibria allowed us to supply the characterisation of such equilibria in Proposition~\ref{prop:NEchar}, and a first-step towards handling ordinary equilibria would be to supply an analogous characterisation.
Another interesting question is about the strategic ability of the agents.
Here, we analysed the case of finite-state strategies: this is motivated by the intention to provide implementable solutions for AI and multi-agent systems communities.
However, from the theoretical perspective it is interesting to consider the infinite-state case.
We can already deduce that this case is not equivalent to the finite-state one, as our setting includes the one of two-player parity mean-payoff games, for which such an inequivalence is proved in~\cite{CHJ05}.
In addition to this, a new technique would be required for addressing that case.
As a matter of fact, infinite-state strategies do not guarantee an ultimately periodic path outcome.
Therefore, the multicycle approach misses the non-periodic solutions.

For future work we foresee a number of possibilities. Some immediate
questions include investigating whether our results extend to
non-strict Nash equilibria (the question we left open) and whether
they still hold in variations of the game, for instance, in games with
more complex preference relations or in games played in different
arenas where the main complexity results may change.  Given the
2\textsc{ExpTime} nature of the main problem, we can also investigate
to what extent the main problem becomes easier in restricted
settings. For instance, typical cases of study include games with
memoryless strategies or with simpler classes of goals, such as
properties that can be described in fragments of \LTL. 
Many questions can also be asked with respect to the quantitative component of our games, {\em e.g.}, whether there is a Nash equilibrium in which each player, or some designated set of players more generally, can be ensured a mean-payoff within a certain interval. All of these
questions constitute important avenues for further work.

\section{Acknowledgments}
The authors would like to thank the reviewers for their careful reading of preliminary versions of this manuscript.
This allowed us to significantly improve the quality and solidity of our results.

\bibliographystyle{abbrv}
\bibliography{References}

\end{document}